\newcommand{\bra}[1]{\langle #1 |}
\newcommand{\ket}[1]{| #1 \rangle}
\newcommand{\be}{\begin{equation}}
\newcommand{\ee}{\end{equation}}
\newcommand{\ba}{\begin{eqnarray}}
\newcommand{\ea}{\end{eqnarray}}
\newcommand{\ignore}[1]{}
\def\CC{{\rm\kern.24em \vrule width.04em height1.46ex depth-.07ex
    \kern-.30em C}}
\def\P{{\rm I\kern-.25em P}}
\def\RR{{\rm
         \vrule width.04em height1.58ex depth-.0ex
         \kern-.04em R}}
\def\bbbc{{\mathchoice {\setbox0=\hbox{$\displaystyle\rm C$}\hbox{\hbox
to0pt{\kern0.4\wd0\vrule height0.9\ht0\hss}\box0}}
{\setbox0=\hbox{$\textstyle\rm C$}\hbox{\hbox
to0pt{\kern0.4\wd0\vrule height0.9\ht0\hss}\box0}}
{\setbox0=\hbox{$\scriptstyle\rm C$}\hbox{\hbox
to0pt{\kern0.4\wd0\vrule height0.9\ht0\hss}\box0}}
{\setbox0=\hbox{$\scriptscriptstyle\rm C$}\hbox{\hbox
to0pt{\kern0.4\wd0\vrule height0.9\ht0\hss}\box0}}}}
\def\bbbz{{\mathchoice {\hbox{$\sf\textstyle Z\kern-0.4em Z$}}
{\hbox{$\sf\textstyle Z\kern-0.4em Z$}}
{\hbox{$\sf\scriptstyle Z\kern-0.3em Z$}}
{\hbox{$\sf\scriptscriptstyle Z\kern-0.2em Z$}}}}
\newtheorem{theorem}{Theorem}[section]
\newtheorem{corollary}[theorem]{Corollary}
\newenvironment{proof}[1][Proof]{\begin{trivlist}
\item[\hskip \labelsep {\bfseries #1}]}{\end{trivlist}}
\begin{document}

\title{A quantum Bose-Hubbard model with evolving graph as toy model for emergent spacetime}

\author{Alioscia Hamma}
\email{ahamma@perimeterinstitute.ca}
\author{Fotini Markopoulou}
\affiliation{Perimeter Institute for Theoretical Physics,
31 Caroline St. N, N2L 2Y5, Waterloo Ontario, Canada}
\email{fmarkopoulou@perimeterinstitute.ca}
\author{Seth Lloyd}
\affiliation{Massachusetts Institute of Technology,
77 Massachusetts Avenue, Cambridge Massachusetts, 02139 USA}
\email{slloyd@mit.edu}
\author{Francesco Caravelli}
\affiliation{Perimeter Institute for Theoretical Physics,
31 Caroline St. N, N2L 2Y5, Waterloo Ontario, Canada}
\affiliation{University of Waterloo, 200 University Ave W, Waterloo ON, N2L 3G1, Canada }
\email{fcaravelli@perimeterinstitute.ca}
\author{Simone Severini}
\affiliation{Department of Physics and Astronomy, University College London, WC1E 6BT London, United Kingdom}
\email{simoseve@gmail.com}
\author{Klas Markstr\"om }
\affiliation{Department of Mathematics and Mathematical Statistics,
Ume� universitet,
S-901 87 Umea, Sweden}

\begin{abstract}
We present a toy model for interacting matter and geometry that explores quantum dynamics in a spin system as a precursor to a quantum theory of gravity. The model has no \emph{a priori} geometric properties; instead, locality is inferred from the more fundamental notion of interaction between the matter degrees of freedom. The interaction terms are themselves quantum degrees of freedom so that the structure of interactions and hence the resulting local and causal structures are dynamical. The system is a Hubbard model where the graph of the interactions is a set of quantum evolving variables. We show entanglement between spatial and matter degrees of freedom. We study numerically the quantum system and analyze its entanglement dynamics. We analyze the asymptotic behavior of the classical model. Finally, we discuss analogues of trapped surfaces and gravitational attraction in this simple model.
\end{abstract}

\pacs{04.60.Pp, 03.65.Ud, 75.10.Jm}
 \maketitle
\section{Introduction}

The quest for a quantum theory of gravity involves searching for a microscopic quantum theory whose low energy limit is the known physics of general relativity, dynamical space-time metrics whose evolution is governed by the Einstein equations. Many approaches, like loop quantum gravity \cite{LQG}, Causal Dynamical Triangulations (CDT) \cite{CDT}, spin foams \cite{SF} and group field theory \cite{GFT}, expect that the quantum theory of gravity becomes manifest at very high energy. That is, quantum analogs of geometry and gravitational properties such as  the quantum Hilbert-Einstein action or Lorentz invariance are built into the high energy theory.  In different theories, various such properties are present at the microscopic level:  In loop quantum gravity, for example, the state space is given in terms of spin network basis states, understood as a discretization of space (see \cite{SnetsLQG,Pen,HasPer}) and, in addition, embedded in 3-dimensional space.  The dynamics of the micro-states is governed by the Hamiltonian constraint, obtained by the canonical quantization of the Hilbert-Einstein action.  In spin foams or group field theory, the microscopic states similarly are based on simplicial graphs or complexes carrying algebraic geometric data, including the Lorentz group.  Depending on the model, a spin foam may be based on states that are embedded complexes or combinatorial (abstract, non-embedded graphs).  In many models, the states carry representations of the Lorentz group, while the dynamics uses the Regge action, a discrete form of the Hilbert-Einstein action.  In CDT, the states are simplicial combinatorial objects, again encoding microscopic Lorentzian geometry, while the dynamics contains the Regge action.  Of course, the important question is how many of these features survive at the low energy or continuum limit.  Is the input of geometric and gravitational properties at the microscopic level necessary for the continuum limit to be gravity?  The answer is at present unclear.  CDT results indicate that while certain properties such as causality are important, others, including the gravitational action, may not be \cite{CDTSHE}.  In spin foams or group field theory, it is not clear that the Lorentz group representations on the microstates ensure the reappearance of the Lorentz group in the continuum limit.  
As an alternative, one could regard quantum gravity as an emergent phenomenon from the low energy theory of a condensed matter system. In this approach, the fundamental theory would have no microscopic degrees of freedom, no gravity, no elements of the Lorentz group, etc.

In recent years, there has been a growing attention to the notion of gravity as an emergent phenomenon \cite{emergentqg}. From Aristotle to Philip Anderson, a long-standing tradition in physics asserts that "The whole is more than the sum of its parts" and that "More is different". The emergent approach is concerned with the study of the macroscopic properties of systems with many bodies. Sometimes, these properties can be tracked down to the properties of the elementary constituents. In recent years, though, there has been a flourishing of novel quantum systems, which show behaviors of the whole system that have no explanation in terms of the constituting particles, but instead of their collective behavior and interaction. When the interaction between the particles cannot be ignored, like in systems of strongly interacting electrons, we see many novel and beautiful properties: gauge fields can emerge as a collective phenomenon, strange quantum phase transitions happen, unusual forms of superconductivity and magnetism appear, novel orders of the matter based on topological properties of the system and featuring exotic statistics are found.

In this sense, one can view the problem of quantum gravity as a problem in statistical physics or condensed matter theory:  we know the low energy physics and are looking for the correct universality class of the microscopic quantum theory.  By analogy to the Ising model for ferromagnetism, one can ask:  What is the ``Ising model'' for gravity?
A number of such approaches to the problem have recently been proposed, ranging from looking for gravity analogues in condensed matter systems \cite{analog,Hu}, to condensed matter systems with emergent graviton excitations \cite{Wengravity}, and spin system models for emergent geometry \cite{KonMarSev,HamMarPreSev}.
 One can also view approaches such as CDT \cite{CDT} or matrix models \cite{MM} in this light.  In comparison to the quantum gravity approaches mentioned above, in the systems we will be investigating there is no straightforward geometric content to the microscopic graph states.  We simply use a dynamical network of quantum relations to describe a world without geometry, where locality is determined by the presence of absence of quantum interactions.  It is perhaps best to think of this system as a quantum information processor, with information more primary than geometry, and geometry being the set of properties such as geometric symmetries that we only expect to find dynamically emergent at low energy.

An important issue in this direction of research in quantum gravity is the dynamical nature of geometry in general relativity.  Normally, methods in condensed matter theory use a fixed background, for example, a spin system on a fixed lattice.   The lattice determines the locality of the interactions and hence is a discretization of geometry.  One can then worry that this direction of research is limited to fixed background geometries.  This is addressed in some, but not all, of the above-mentioned approaches.  For example,
 CDT  is based on path integral dynamics on an ensemble of all lattices (each lattice is a regularization of a Lorenzian geometry), thus providing a proper non-perturbative approach to the problem.  Elsewhere, ideas from quantum information theory have been introduced to deal with this problem \cite{seth,me,HamMarPreSev}.

An alternative direction, and the one we are pursuing in the present article, is to make the lattice itself dynamical.  Two of the present authors proposed such a model for the emergence of geometry in \cite{KonMarSev}.  The basic idea was to promote the lattice links to dynamical quantum degrees of freedom and construct a Hamiltonian such that, at low energy, the system ``freezes'' in a configuration with recognizable geometric symmetries, interpreted as the geometric phase of the model.
The present work revisits the same idea but in a different model with two central properties:
\begin{itemize}
\item
The model is a spin system on a {\em dynamical} lattice.
\item
There are lattice and matter degrees of freedom.
The lattice interacts with the matter:  matter tells geometry how to curve and geometry tells matter where to go.
\end{itemize}

The starting point for the implementation of the above is considerations of locality.  Normally,  locality is specified by the metric $g_{\mu\nu}$ on a manifold $\mathcal M$.  Dynamics of matter on $({\mathcal M}, g_{\mu\nu})$ is given by a Lagrangian which  we call  local if the interaction terms are between systems local according to $g_{\mu\nu}$.
 A  Lagrangian with non-local interaction terms is typically considered unphysical.  That is, the matter dynamics is made to match the given space-time geometry.  We will do the reverse and define geometry via the dynamics of the matter.  Our principle is that if particles $i$ and $j$ interact, they must be adjacent.  This is  a dynamical notion of adjacency in two ways: it is inferred from the dynamics and, being a quantum degree of freedom, it changes dynamically in time. This amounts to a spin system on a dynamical lattice and to interaction of matter with geometry.

To summarize, we present a toy model for the emergence of locality from the dynamics of a quantum many-body system. No notion of space is presupposed. Extension, separateness, distance, and all the spatial notions are emergent from the more fundamental notion of interaction. The locality of interactions is now a consequence of this approach and not a principle.
We will promote the interaction terms between two systems to quantum degrees of freedom, so that the structure of interactions itself becomes a dynamical variable.
This makes possible the interaction and even entanglement between matter and geometry.

This toy model is also a condensed matter system in which the pattern of interaction itself is a quantum degree of freedom instead of being a fixed graph. It can be regarded as a Hubbard model where the strength of the hopping emerges as the mean field value for other quantum degrees of freedom. We show a numerical simulation of the quantum system and results on the asymptotic behavior of the classical system. The numerical simulation is mainly concerned with the entanglement dynamics of the system and the issue of its thermalization as a closed system. A closed system can thermalize in the sense that the partial system shows some typicality, or some relevant observables reach a steady or almost steady value for long times. The issue of thermalization for closed quantum system and the foundations of quantum statistical mechanics gained recently novel interest with the understanding that the role of entanglement plays in it \cite{thermalization}. The behavior of out of equilibrium quantum system under sudden quench, and the approach to equilibrium has been recently the object of study to gain insight in novel and exotic quantum phases like topologically ordered states.

From the point of view of  Quantum Gravity, the interesting question is whether such a system can capture aspects of the dynamics encoded by the Einstein equations.  We start investigating in this direction by studying an analogue of
a trapped surface that may describe, in more complete models, black hole physics. We discuss physical consequences of the entanglement between matter and geometry.

The model presented here is very basic and we do not expect it to yield a realistic description of gravitational phenomena. What we would like to show  is that such a model can have an emergent, quantum-mechanical notion of geometry (even if not smooth), that locality is derivative from dynamics, and the extent to which such a simple model may capture aspects of the Einstein equations, is left to future work.

\section{The Model with hopping bosons on a dynamical lattice}

\subsection{Promoting the edges of the lattice to a quantum degree of freedom}
We start with the primitive notion of a set of $N$ distinguishable physical systems. We assume a quantum mechanical description of such physical systems, given by the set $\{ \mathcal H_i, H_i\}$ of the Hilbert spaces $\mathcal H_i$ and Hamiltonians $H_i$ of the systems $i=1,...,N$. This presumes it makes sense to talk of the time evolution of some observable with support in $\mathcal H_i$ without making any reference to space.

We choose $\mathcal H_i$ to be the Hilbert space of a harmonic oscillator. We denote its creation and destruction operators by $b^\dagger_i,b_i$, respectively, satisfying the usual bosonic relations. Our $N$ physical systems then are $N$ bosonic particles and
the total Hilbert space for the bosons is given by
\be
\mathcal H_{bosons} = \bigotimes_{i=1}^N \mathcal H_i.
\ee
If the harmonic oscillators are not interacting, the total Hamiltonian is trivial:
\be\label{hv}
H_v = \sum_{i=1}^N H_i =- \sum_i \mu_i b^\dagger_i b_i.
\ee
If, instead, the harmonic oscillators are interacting, we need to specify which is interacting with which.  Let us call $\mathcal I$ the set of the pairs of oscillators ${\bf e}\equiv(i,j)$ that are interacting. Then the Hamiltonian would read as
\be
H = \sum_i H_i + \sum _{{\bf e}\in \mathcal I} h_{\bf e}
\label{eq:H}
\ee
where $h_{\bf e}$ is a Hermitian operator on $\mathcal H_i \otimes \mathcal H_j$ representing the interaction between the system $i$ and the system $j$.

We wish to describe space as the system of relations among the physical systems labeled by $i$.  In a discrete setup like ours, a commonly used primitive notion of the spatial configuration of $N$ systems can be provided by an adjacency matrix $A$, the $N\times N$ symmetric matrix defined as follows:
\be
A_{ij}=\left\{{\begin{array}{ll}
1&{\mbox{if $i$ and $j$ are adjacent}}\\
0&{\mbox{otherwise}}.
\end{array}}
\right.
\ee
The matrix $A$ is associated to a graph on $N$ vertices whose edges are specified by its the nonzero entries.  Now, it is clear that the set $\mathcal I$ of interacting nodes in the Hamiltonian (\ref{eq:H}) also defines a graph $G$ whose vertices are the $N$ harmonic oscillators and whose edges are the pairs ${\bf e}\equiv(i,j)$ of interacting oscillators. Here $\mathcal I$ is the edge set of $G$.  We want to promote the interactions - and thus the graph itself - to a quantum degree of freedom.

To this goal, let us define  $\mathcal G$ as the set of graphs $G$ with $N$ vertices. They are all subgraphs of $K_N$, the complete graph on $N$ vertices, whose $\frac{N(N-1)}{2}$ edges correspond to the (unordered) pairs ${\bf e}\equiv(i,j)$ of harmonic oscillators.  To every such pair ${\bf e}$ (an edge of $K_N$)  we associate a Hilbert space $\mathcal H_{\bf e}\simeq \CC^2$ of a spin $1/2$.
 The
total Hilbert space for the graph edges is thus
\be
\mathcal H_{graph} = \bigotimes_{\bf e =1}^{N(N-1)/2} \mathcal H_{\bf e}.
\ee
We choose the basis in $\mathcal H_{graph}$  so that to every graph $g\in\mathcal G$ corresponds a basis element in $\mathcal H_{graph}$: the basis element $\ket{ e_1\ldots  e_{N(N-1)/2}}\equiv\ket{G}$ corresponds to the graph $G$ that has all the edges ${\bf e}_s$ such that $e_s =1$. For every edge $(i,j)$, the corresponding $SU(2)$ generators will be denoted as $S^i = 1/2 \sigma^i$ where $\sigma^i$ are the Pauli matrices.

The total Hilbert space of the theory is
\be
\mathcal{H} = \mathcal H_{bosons}\otimes \mathcal H_{graph},
\ee
and therefore a basis state in $\mathcal H$ has the form
\be
\ket{\Psi} \equiv \ket{\Psi^{(bosons)}}\otimes\ket{\Psi^{(graph)}}\equiv \ket{n_1,...,n_N}\otimes \ket{e_1,...,e_{\frac{N(N-1)}{2}}}
\ee
The first factor tells us how many bosons there are at every site $i$ (in the Fock space representation) and the second factor tells us which pairs ${\bf e}$ interact. That is,  the structure of interactions is now promoted to a quantum degree of freedom. A generic state in our theory will have the form
\be\label{state}
\ket{\Phi} = \sum_{a,b} \alpha_{a,b} \ket{\Psi^{(bosons)}_{a}}\otimes\ket{\Psi_b^{(graph)}},
\ee
with $\sum_{a,b} |\alpha_{a,b}|^2=1$. In general, our quantum state describes a system in a generic superposition of energies of the harmonic oscillators, and of interaction terms among them. A state can thus be a quantum superposition of ``interactions''. For example, consider the systems $i$ and $j$ in the state
\be\label{s1}
\ket{\phi_{ij}} = \frac{\ket{10}\otimes\ket{1}_{ij}+ \ket{10}\otimes\ket{0}_{ij}}{\sqrt{2}}.
\ee
This state describes the system in which there is a particle in $i$ and no particle in $j$, but also there is a quantum superposition between $i$ and $j$ interacting or not. The following state,
\be\label{ent}
\ket{\phi_{ij}} = \frac{\ket{00}\otimes\ket{1}_{ij}+ \ket{11}\otimes\ket{0}_{ij}}{\sqrt{2}}.
\ee
represents a different superposition, in which the particle degrees of freedom and the graph degrees of freedom are entangled. It is a significant feature of our model that {matter can be entangled with geometry}.

An interesting interaction term is the one that describes the physical process in which a quantum in the oscillator $i$ is destroyed and one in the oscillator $j$ is created. The possibility of this dynamical process means there is an edge between $i$ and $j$. Such dynamics is described by a Hamiltonian of the form
\be\label{hhop}
H_{hop} = -t\sum_{(i,j)}P_{ij}\otimes (b^\dagger_ib_j +b_i b^\dagger_j)
\ee
where
\be
P_{ij}\equiv S^+_{(i,j)}S^-_{(i,j)} =\ket{1}\bra{1}_{(i,j)} = \left( \frac{1}{2}-S^z \right)_{(i,j)}
\ee
is the projector on the state such that the edge $(i,j)$ is present and the spin operators are defined as $S^+_{(i,j)} = \ket{1}\bra{0}_{(i,j)}$ and $S^-_{(i,j)} = \ket{0}\bra{1}_{(i,j)}$. With this Hamiltonian, the state Eq.(\ref{s1}) can be interpreted as the quantum superposition of a particle that may hop or not from one site to another. It is possible to design such systems in the laboratory. For instance, one can use arrays of Josephson junctions whose interaction is mediated by a quantum dot with two levels.

We note that it is the
dynamics of the particles described by $H_{hop}$ that gives to the degree of freedom $\ket{e}$ the meaning of geometry\footnote{
We use {\em geometry} and {\em space} as shorthand for the adjacency relations encoded in $|\Psi^{graph}\rangle$, even though the generic $|\Psi^{graph}\rangle$ will be a graph (or a superposition of graphs) without any symmetries and hence not a candidate for a discretization of a smooth geometry.  In this simple model, we make no attempt to dynamically flow to a $|\Psi^{graph}\rangle$ with recognizable geometric symmetries, as was done, for example, in \cite{Kon,MarSmo,KonMarSev}.  One can address this in a future model by extending the Hamiltonian of the model.
}.
 The geometry at a given instance is given by the set of relations describing the dynamical potentiality for a hopping. Two points $j$, $k$ can be "empty", that is, the oscillators $j,k$ are in the ground state, but they can have a spatial relationship consisting in the fact that they can interact. For example, they can serve to have a particle to hop from $i$ to $j$, then to $k$, then to $l$. We read out the structure of the graph from the interactions, not from the mutual positions of particles.

In addition, $H_{hop}$
  tells us that it takes a finite amount of time to go from $i$ to $j$. If the graph is represented by a chain, it tells us that it takes a finite amount of time (modulo exponential decaying terms) for a particle to go from one end of the chain to another. This results to a ``spacetime'' picture (the evolution of the adjacency graph in time) with a finite lightcone structure. The hopping amplitude is given by $t$, and therefore all the bosons have the same speed. We can make the model more sophisticated by enlarging the Hilbert space of the links, and obtain different speeds for the bosons. Instead of considering spins $1/2$, consider an $S-$level system. The local Hilbert space is therefore
\be
\mathcal H_e = \mbox{span} \{ \ket{0}, \ket{1},\ldots, \ket{S-1} \}
\ee
Now consider the projector onto the $s-$th state on the link $(i,j)$: $P^{(s)}_{ij} = \ket{s}\bra{s}_{ij}$. We can define a new hopping term whose amplitude depends on the level of the local system in the following way:
\be\label{hhop2}
H_{hop} = -\sum_{s,(i,j)} t_s P^{(s)}_{ij}\otimes (b^\dagger_ib_j +b_i b^\dagger_j)
\ee
where the hopping amplitudes $t_s$ depend on the state $s$ of the system, and $t_0=0$. For instance, the $t_s$ can be chosen larger for larger $s$. In this way, moves through higher level links are more probable, and therefore the speed of the particles is not constant. In the following, we will study the model with just the two level system.

Of course, we need a Hamiltonian also for the spatial degrees of freedom alone. The simplest choice is simply to assign some energy to every edge:
\be\label{hlink}
 H_{link} = -U\sum_{(i,j)} \sigma^z_{(i,j)}
\ee

Finally, we want space and matter to interact in a way that they can be converted one into another. The term
\be
H_{ex} =k \sum_{(i,j)}  \left( S^-_{(i,j)}\otimes (b^\dagger_i
b^\dagger_j)^R +S^+_{(i,j)}\otimes (b_i  b_j)^R\right)
\ee
can destroy an edge $(i,j)$ and create $R$ quanta at $i$ and $R$ quanta at $j$, or, vice-versa, destroy $R$ quanta at $i$ and $R$ quanta at $j$ to convert them into an
edge.

The terms $H_{link}$ and $H_{ex}$ are so simple that we will not expect them to give us any really interesting property of how regular geometry can emerge in such a system. This is the subject for a more refined and future work. Nevertheless, this term has an important meaning because the nature of the spatial degrees of freedom is completely reduced to that of
the quanta of the oscillators: an edge is the bound state of $2R$ quanta. When in the edge form, the quanta cannot hop around. When unbounded, they can hop around under the condition
that there are edges from one vertex to another.
One can replace the separation of the fundamental degrees of freedom into bosons and graph edges with a unified set of underlying particle ones, single bosons and collections of $2R$ bound bosons.
Therefore a bound state of $2R$ quanta in the pair $(\mathcal H_i, \mathcal H_j)$ tells us what physical systems are at graph distance one. The set of such bound states as we vary $j$ is the neighborhood of the system $i$. This is the set of  vertices $j$ a free particle in $i$ can hop to. The projector $P_{ij}$ has thus the meaning that the hopping interaction must be {\em local} in the sense just defined.

Now we see that the term $H_{ex}$ is not satisfactory because exchange interactions are possible between any pair of vertices, no matter their distance. So quanta that are far apart can be converted in an edge between two points that were very far just before the conversion. Moreover, also the conjugate process is problematic, because it can easily lead to a graph made of disconnected parts. We implement locality by allowing exchange processes only between points that are connected by some {\em other} short path of length $L$. Note that this refers to the locality of the state $|\Psi^{graph}\rangle$ at time $t$ relative to the locality of the state at time $t-1$.
Consider again the projector $P_{ij}$ on the edge $(i,j)$ being present. Its $L-$th power is given by
\be
P_{ij}^L =\sum_{k_1,...,k_{l-1}} P_{ik_1}P_{k_2k_3}\dots P_{k_{L-1}j}.
\ee
For every state $\ket{\Psi}\in\mathcal H_{graph}$, we have that $P_{ij}\ket{\Psi}\ne {\bf 0}$ if and only if there is at least another path of length $L$ between $i$ and $j$.
We can now modify the term $H_{ex}$ as follows:
\be\label{hex}
H_{ex} =k \sum_{(i,j)}  \left(S^-_{(i,j)}P^L_{ij}\otimes (b^\dagger_i
b^\dagger_j)^R +P^L_{ij}S^+_{(i,j)}\otimes (b_i  b_j)^R\right).
\ee
In the extended $S-$level system, the exchange term is modified as $\ket{(s+1) (\mbox{mod} S)}\bra{s(\mbox{mod} S)}_{(i,j)} \otimes(b_i b_j)^{R}$ and similarly for the hermitian conjugate.

This was the final step that brings us to the total Hamiltonian for the model which is
\be \label{h}
H = H_{link}+ H_v + H_{ex} + H_{hop}.
\ee
In the following, we consider the theory for $L=2$, which is the strictest notion of locality  for the exchange interaction one can implement.

\subsection{Discussion of the model}
We can summarize the model in the following way. All we have is matter, namely the value of a function $f_i$, where the indexes $i$ label different physical systems. We have chosen  $f_i$ to be the number of quanta of the $i$-th harmonic oscillator. The bound state of a particle in $i$ and a particle in $j$ has the physical effect that other particles in $i$ and $j$ can interact.  When there is such a bound state, we say there is an edge between $i$ and $j$. Then other particles at $i$ and $j$ can interact, for instance, they can hop from $i$ to $j$.  The collection of these edges, or bound states, defines a graph which we interpret as the coding of the spatial adjacency of the particles (in a discrete and relational fashion).
The physical state of the many body system is the quantum superposition of configurations of the particles and of the edges. The system evolves unitarily, and particles can hop around along the edges. But the distribution of the particles also influences the edges because some particles at vertices $i,j$ can be destroyed (if $i$ and $j$ are nearby in the graph) to form another edge, and therefore making $i$ and $j$ nearer. The new edge configurations then influence the motion of the particles and so on. We have a theory of matter interacting with space. The intention of the model is to study to what extend such dynamics captures aspects of the Einstein equation and whether it (or a later extension of such a spin system) can be considered as a precursor of the gravitational force. From the condensed matter point of view, this is a Hubbard model for hopping bosons, where the underlying graph of the Hubbard model is itself a quantum dynamical variable that depends on the motion of the bosons. In the spirit of General Relativity, the edges (space) tell the bosons (matter) where to go, and the bosons, by creating edges, tell the space how to curve.

We note that, in this theory, all that interacts has a local interaction by definition. We defined locality using the notion of neighborhood given by the set of systems interacting with a given system \footnote{
There is, however, a way to define non-local interactions. For instance, consider the configuration of the graph of the square lattice. Pick two vertices $i,j$ at large distance $l_{ij}$ on this lattice (in the graph distance sense), and place a new edge connecting them. Now by definition the two vertices $i$ and $j$ are adjacent. Nevertheless, the ratio between the number of paths of distance one from $i$ to $j$  with the number of paths of distance $l_{ij}$ goes to zero in the limit of large $l_{ij}$. This is what one can call a {\em non-local} interaction. It corresponds to the situation in lattice field theory where one has a fixed graph, and then some interactions between points that are far apart.}.  We also note that, due to quantum superpositions, matter and space can be entangled. For this reason, the dynamics of the matter alone is the ruled by a quantum open system, the evolution for the matter degrees of freedom is described no more by a unitary evolution operator but by a completely positive map. We can show that the entanglement increases with the curvature. To fix the ideas, let us start with a flat geometry represented by the square lattice as the natural discretization of a two dimensional real flat manifold. In this model, a flat geometry with low density of matter can be described by a square (or cubic) lattice with a low density of bosons. This means that a particle is most of the time alone in a region that is a square lattice. The model will not then allow interaction between the particle and the edges, and all that happens is a free walk on the graph. On the other hand, when we increase the degree of the vertices by adding more edges, we make interaction, and hence entanglement, between edges and particles possible. This corresponds to increasing the curvature. In a regime of very weak coupling, $k\ll t\ll U,\mu$, entanglement will be possible only in presence of extremely strong curvature. From the point of view of the dynamics of the quantum system, this means that the evolution for the matter is very close to be unitary when curvature is low, while very strong curvature makes the evolution for the particles non-unitary and there will be decoherence and dissipation with respect to the spatial degrees of freedom.

How does the graph evolve in time in such a model? The quantum evolution is complex, and since the model is not exactly solvable, numerical study is constrained to very small systems. In the next section we simulate the system with $4$ vertices and hard core bosons.

We can gain some insight from the analysis of the classical model, regarding $H$ as the classical energy for classical variables. Since we delete edges randomly and build new edges as the result of a random walk of the particles, and there is nothing in this model that favors some geometry instead of others, we do not expect to obtain more than random graphs in the limit of extremely long times. Indeed, we can argue as follows. With the exception of a very small number or graphs, all the other graphs belong to the set of graphs in which one can - under the evolution of our model - reach a ring. In practice, this means that there is a configuration in which one deletes all the edges without disconnecting the graph, and obtains many particles. This means that, starting from a state with zero particles, $N$ vertices and $L_0$ edges the number of eligible edges for deletion is $L_0 - \alpha N$ with some constant $\alpha$ of order $1$. So, as long as $L_0 > \alpha N$, the dynamical equilibrium between the number of edges and particles is realized when the rate of conversion of edges into particles equals the one of conversion of particles into edges. Let $r$ be the number of edges destroyed (and pairs of particles created). If we assume that the particles move much faster than the edges, they will always be eligible for creating a new edge. The rates will be then the same when $L_0-r = r$ which implies that at the dynamical equilibrium half of the initial edges are destroyed. Consider the case of the complete graph $K_N$, with zero initial particles. The initial number of edges is $L_0 = N(N-1)/2$, and at long times, the dynamical equilibrium is reached when  $r = L_0/2$. A similar reasoning can be applied to r instance, consider a square lattice of $N$ vertices with periodic boundary conditions. In the state without particles, this is an eigenstate of the Hamiltonian, and therefore its evolution is completely frozen. Nevertheless, if we add a pair of particles, then all the other configurations of the graph can be reached, including the ring immersed in a gas of many particles. It turns out that the dynamical equilibrium is reached when $L_0/2 =N$ of the initial edges are destroyed. The equilibrium state is obtained by deleting the edges randomly, and thus we expect to obtain a random graph. In order to obtain more interesting stable geometries, one has to put other terms in the Hamiltonian, that involve more edges together, meaning that curvature has a dynamical importance. The rigorous treatment of the asymptotic evolution of such graphs requires an analysis in terms of Markov chains, and it is developed in the next section.

\subsection{Trapped surfaces}
Another feature of this model is that allows for a very good trapping of particles and light, Let us consider a configuration of the system in which we have a region $\mathcal S$ of the graph with $N_S$ vertices, that is highly connected, in an almost complete way, and then connected to the rest of the graph that can be a square or cubic lattice. We assume to be in the large limit for the ratio $r$ between the number of edges in $\mathcal S$ and the number of edges between connecting $\mathcal S$ to the rest of the graph. Initially there are no particles. The region of the graph with "flat" geometry, that is, the square lattice, is frozen. In the highly connected lump, edges will start converting in particles. We have two time scales. The time scale to reach the dynamical equilibrium between the processes of destruction/construction of edges, and the time scale for particles to escape from $\mathcal S$. We assume that $r$ is large enough that the dynamical equilibrium is reached well before $\mathcal S$ starts losing particles. As a matter of fact, the region $\mathcal S$ will behave as the complete graph. Half of the edges will convert in particles, and we will have a dynamical equilibrium between a graph with $O(N_S^2)$ edges in a gas of $O(N_S^2)$ particles. Other particles coming from outside, will get trapped inside $\mathcal S$ too, therefore increasing the number of edges and particles within it. We want now to show that not even light can escape from this region. First of all, we want to have light propagating in our model. This can be done by adding to the Hamiltonian the term of Wen's $U(1)$ theory for emerging light \cite{wenlight}. In the phase where the couplings of the $U(1)$ theory are small with respect the others, we can have electromagnetic waves traveling on the graph. Now consider the state with very low density of particles, and with a graph that is represented by the wave function $\Psi = \Psi _A \otimes \Psi_B$. Here $B$ is a set of $n_B$ nodes such that $n_B \ll n_A$. The wavefunction $\Psi_A$ is chosen to be the one representing a 3D cubic lattice. The wavefunction $\Psi_B$ is instead chosen to be a very high dimensional hypercubic lattice, with dimension $D\sim n_B$. Then one has to knit carefully the nodes in $A$ with some of those in $B$. We choose to knit them with just $n^\frac{2}{D}_B$ of them, representing their "surface" $\Sigma_B$. To summarize, the wave-function in the region outside the surface represents a discrete version of an euclidean three dimensional space. Inside the surface, we still have an euclidean space, but of very high dimension. On such a lattice, which has a well defined geometry, the emergent light is obeys the laws of geometric optics. On the other hand, a general wave-function $\Phi$ for the edges would not posses any definite geometric meaning and the Wen's model cannot even be defined. 

The speed of light $c$ on this graph can be estimated using the Lieb-Robinson bounds \cite{LRS} and one can prove that $c$ has different values in the different mediums $A,B$ and it is proportional the geometric dimension of the medium so that we have $c_B/c_A \sim n_B$. The argument about the asymptotic states shows that the number of edges in $B$ will always be of the order of $n_B$. In the geometry we have chosen, it makes sense to speak about the angle of a ray of light with respect to the normal to the surface separating $A$ and $B$. The Snell's law of optics will imply that the critical angle for total internal reflection is
\be
\theta_c = \sin^{-1} \frac{c_B}{c_A} \simeq \sin^{-1} n_B
\ee
and therefore the probability of an emerging ray of light from $B$ is of the order of $n^{-1}_B$. We see that light (and matter) are trapped within the hypersurface in the large $n_B$ limit. Eventually, some light can come out, and since the outer graph, even though less dense, contains more vertices and edges, eventually the region of space $B$ has to evaporate. The whole process is completely unitary. Nevertheless, the emitted quanta of light and matter are entangled with the spatial degrees of freedom  (the edges) inside $\Sigma_B$. But when $B$ has evaporated, there are still edges there, with a density not much different from those outside $\Sigma_B$. Therefore the final state of the emitted quanta can be still entangled with the degrees of freedom inside $\Sigma_B$ even if the "black hole" has evaporated. The spectrum of the emitted radiation obtained by tracing out the spins in $\Sigma_B$ will therefore be mixed, even though the whole process is unitary. The black hole information paradox can be described in terms of entanglement. Pairs of particles inside/outside the event horizon are created and these pairs are entangled. The density matrix of the particles outside the horizon is therefore mixed, because there is a classical mixture of the particles coming from having to trace out all the degrees of freedom inside the horizon to which we have no access. The problem is that when the black hole disappears, there is nothing for the particles to be entangled with, and the mixture becomes a paradox. In our model, the particles are entangled with the spatial degrees of freedom.  The disappearance of the black hole just means that the spatial degrees of freedom acquire a particular configuration, but the particles are still entangled with them, as in Eq.(\ref{ent}).


\section{The model with hard core bosons}
\begin{figure}
  \includegraphics[width=14cm]{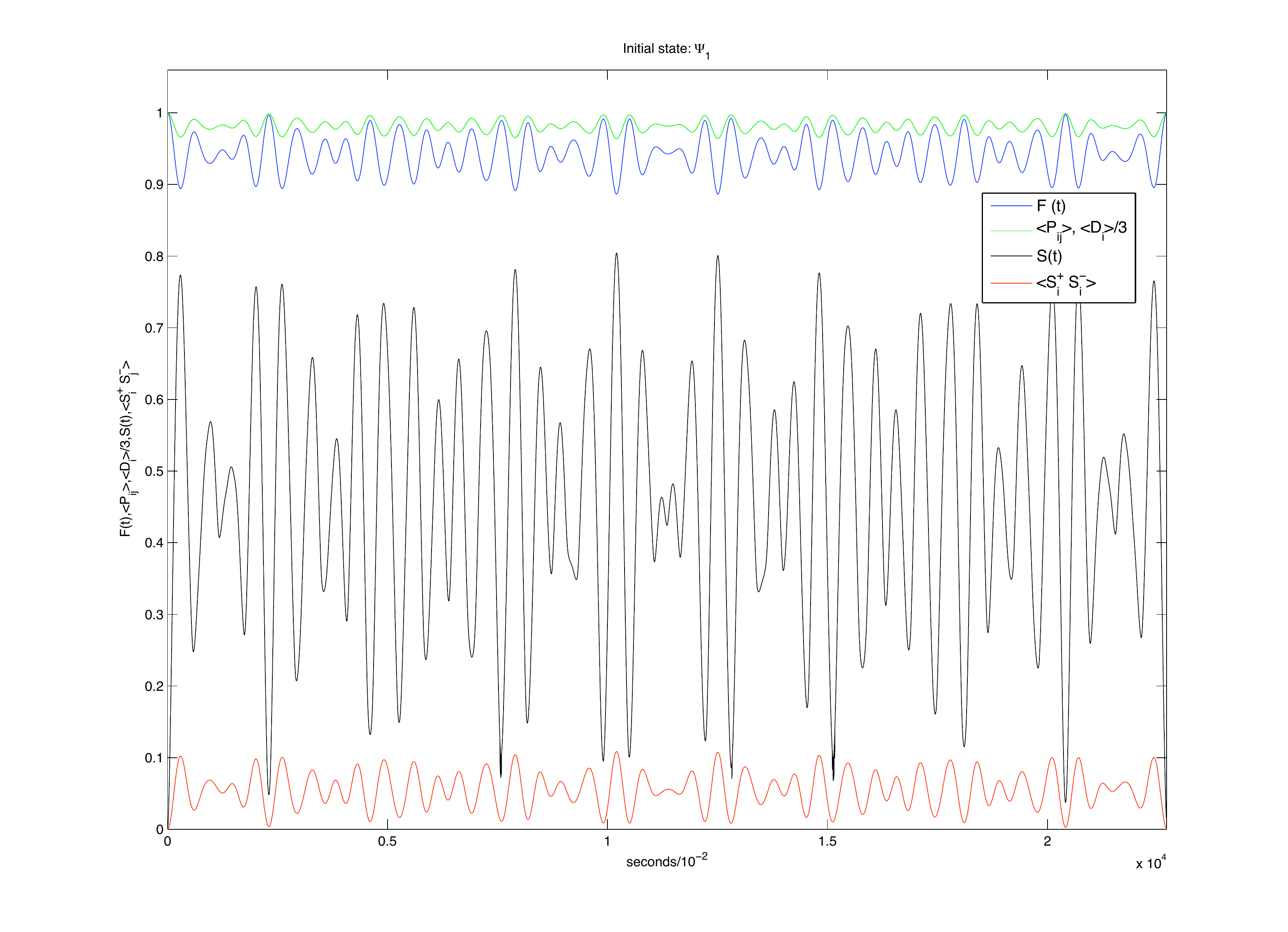}\\
  \caption{Simulation of the system $H_{1/2}$ for $N=4$. The initial state is $\rho_1 = \ket{\psi_1}\bra{\psi_1}$ where $\ket{\psi_1} = \ket{0000111111}$, that is, there are no particles and all edges are present. The parameters for this simulation are $U= \mu  =1, t=k =.1$. In the figure are plotted the quantities $\langle S^+_i S^-_i \rangle$ (red line), $\mathcal F (t)$ (blue line), $S(t)$ (black line), $P_{ij}(t), D_i(t)/3 $ (green line) as a function of time.  Revivals of the expectation value of the link operator coincide with revivals in the fidelity with the initial state. The initial value of the entanglement is $S(0)=0$ because the initial state is separable. Notice that even though the fidelity is $\mathcal F (t) \gtrsim 0.85$, the state has a non negligible entanglement.}
  \label{q1}
\end{figure}

\begin{figure}
  \includegraphics[width=18cm]{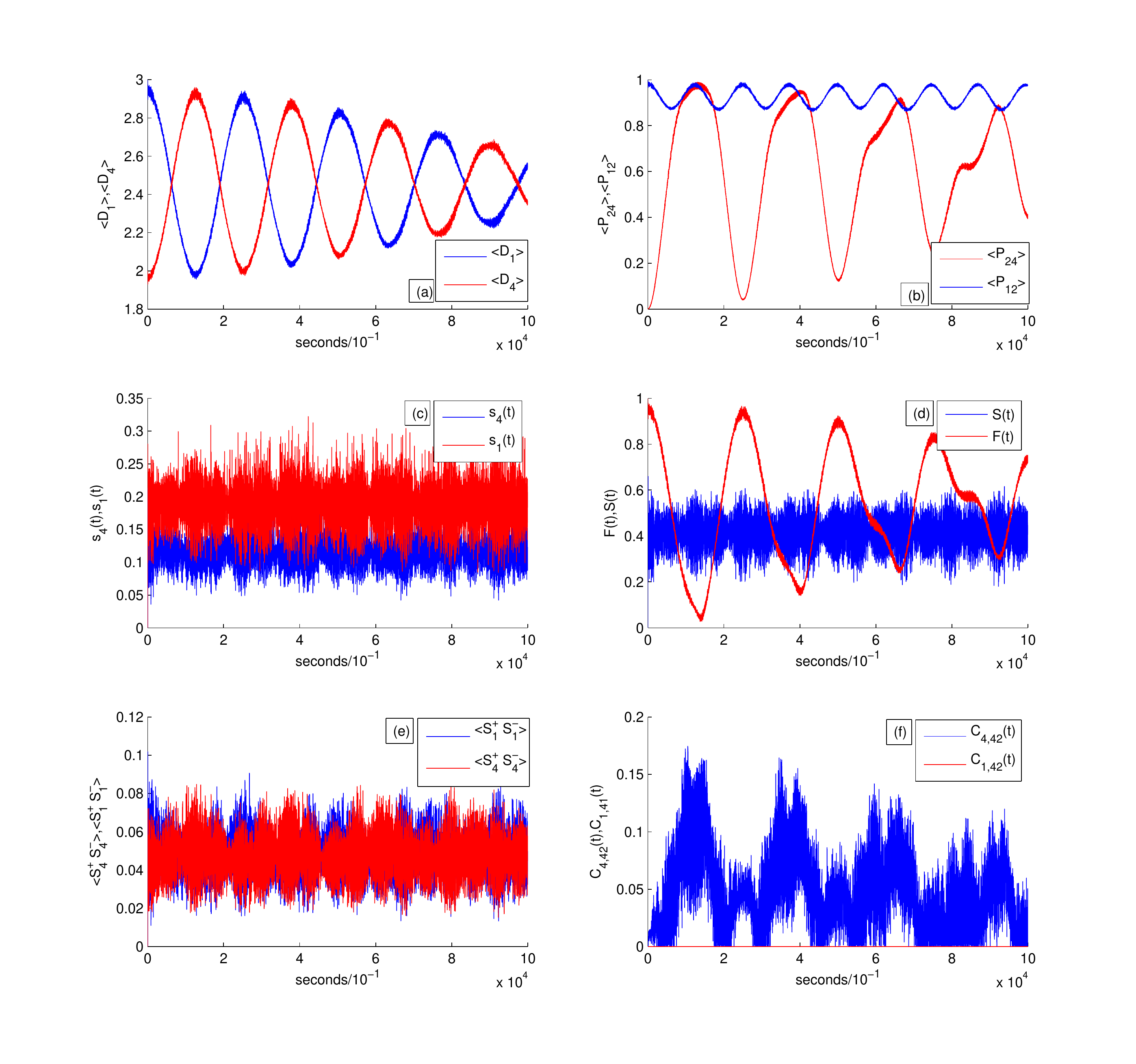}\\
  \caption{The initial state of this numerical integration is $\ket{\psi_2}=\ket{0000111101}$, the parameters are in the ``insulator" phase, $U=\mu=1$, $t=k=0.1$. The temporal scale is in units of $\hbar$, on a range of $10^4$ seconds. The diagonalization of the full system has been performed by means of Householder reduction. \textbf{a)} Time evolution of $\langle D_1 (t)\rangle, \langle D_4 (t)\rangle$. The damping of the oscillations is a sign of thermalization. \textbf{b)} Expectation values $\langle P_{12}(t)\rangle, \langle P_{24}(t)\rangle$. The latter observable is thermalizing. \textbf{c)} Von Neumann Entropy $s_i (t)$ for the sites $i = 1,2$. We see that the entanglement dynamics is split in two different bands. The two vertices are only distinguished by the initial degree. \textbf{d)} Entanglement evolution $S(t)$ and overlap with the initial state $\mathcal F (t)$.  The damping of $\mathcal F (t)$ is a clear sign of thermalization. The entanglement $S(t)$ between particles and edges shows the entangling power of the system. \textbf{e)} Expectation value of the particle operators at two different sites $i=1,4$. \textbf{f)} Concurrence $C(t)$ as a function of time of the particles on the site $i=2$ with the edge $(2,4) (blue)$. Again we notice a damping of oscillations. Instead, the concurrence between the site 1 and the link 5  (red) is identically zero.}
  \label{q2}
\end{figure}

\begin{figure}
  \includegraphics[width=18cm]{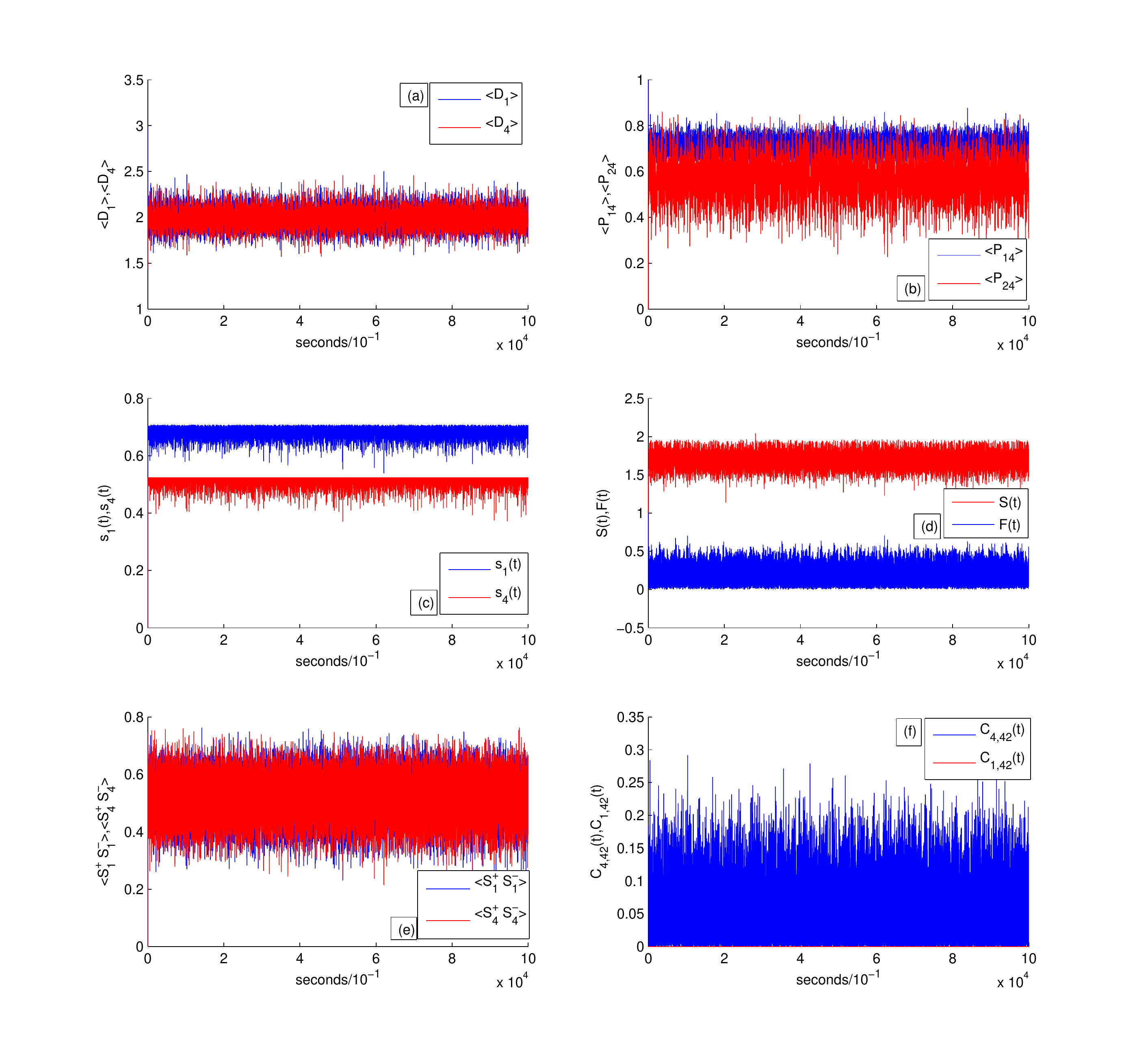}\\
  \caption{ The initial state of this numerical integration is $\ket{\psi_2}=\ket{0000111101}$, the parameters are in the ``superfluid" phase, $U=\mu=1$, $t=k=0.1$. The temporal scale is in units of $\hbar$, on a range of $10^4$ seconds. The diagonalization of the full system has been performed by means of Householder reduction.  \textbf{a)} Time evolution of $\langle D_1 (t)\rangle, \langle D_4 (t)\rangle$.. The oscillations have constant amplitude and the system does not present signs of thermalization. \textbf{b)} Expectation values of the link operators $ P_{14}, P_{24}$. There is no sign of thermalization. The two values belong to two different bands depending on the initial value of the operator. \textbf{c)} Von Neumann Entropy $s_i (t)$ for the sites $i = 1,2$. We see that the entanglement dynamics is split in two different bands. The two vertices are only distinguished by the initial degree and the splitting is more marked than in the ``insulator" case. Compare the result with the higher overlap of the operators $D_{ij} (t)$. \textbf{d)} Entanglement evolution $S(t)$ and overlap with the initial state $\mathcal F (t)$.  Again the plots show no signs of thermalization. The behavior of $\mathcal F(t)$ implies very long recurrence times. \textbf{e)} Expectation value of the particle operators at two different sites $i=1,4$. \textbf{f)} Time evolution of Concurrence $C(t)$. In blue is plotted the Concurrence between the vertex $i=2$ and the edge $(2,4)$ for the ``superfluid" case.  Unlike the insulator case, the behavior of $C(t)$ does not show any sign of thermalization. The concurrence between the site 1 and the link 5 is identically zero, as in the ``insulator'' case.}
  \label{q3}
\end{figure}

\subsection{Setting of the model}
In this section, we study the model Eq.~(\ref{h}) when the particles are hard core bosons. In this model, only at most one particle is allowed per site and the model can be mapped onto a spin system. We are particularly interested in the entanglement dynamics of the system. We have performed a numerical simulation of the time evolution of the model described by Eq. (\ref{h}). Since we are interested also in describing the quantum correlations in the reduced density matrix, we have resorted to exact diagonalization. In this way, we are able to compute the entanglement of the matter degrees of freedom with respect to the spatial ones. Of course, the simulation of a full quantum system is heavily constrained by the exponential growth of the Hilbert space. In this work, we have resorted to the simulation of hard-core bosons: at most one particle is allowed at any site. Hardcore bosons creation and annihilation operators must thus satisfy the constraints
\ba
(\hat{b}_i^\dagger)^2=(\hat{b}_i)^2=0 \\
\{ \hat{b}_i, \hat{b}^\dagger _i \} = 1
\ea
With these constraints, the bosonic operators map into the $SU(2)$ generators
\ba
\hat{b}_i^\dagger && \leftrightarrow S^+_i \\
\hat{b}_i && \leftrightarrow S^-_i \\
\hat{b}_i^\dagger \hat{b}_i && \leftrightarrow\left(  \frac{1}{2} - S^z \right)_i
\ea
The local Hilbert space of a site $i$ for a hard core boson is therefore that of a spin one half: $\mathcal H_i^{hcb} \simeq \CC^2$. After the projection onto the hard-core bosons subspace, the model becomes a purely spin $1/2$ model. For a system with $n$ sites, the Hilbert space for the particles is thus the $2^N$-dimensional Hilbert space $\mathcal H_{bosons} = \otimes_{i=1}^N \mathcal H_{i}^{hcb}\simeq \CC^{2\otimes N}$. The Hilbert space for the spatial degrees of freedom is still the $2^{N(N-1)/2}$-dimensional Hilbert space
$\mathcal H_{graph} = \bigotimes_{\bf e =1}^{N(N-1)/2} \mathcal H_{\bf e}$. The total Hilbert space is thus the $2^{N(N+1)/2}$-dimensional Hilbert space
\be
\mathcal{H}_{spins} = \mathcal H_{bosons}\otimes \mathcal H_{graph}\simeq \bigotimes_{i=1}^N \mathcal H_{i}^{hcb}\bigotimes_{\bf e =1}^{N(N-1)/2} \mathcal H_{\bf e}
\ee
As a basis for $\mathcal{H}_{spins}$ we use the computational basis. The basis is thus $\{ \ket{i_1, ..., i_{N(N-1)/2}; j_1,...,j_N}\}$, where the first $N(N-1)/2$ indices $i_k$ label the edges of the graph, and the remaining $N$ indices $j_k$ label the vertices. Of course $i_k, j_k = 0,1$ for every $k$.

After the projection onto the hard core bosons space $\mathcal{H}_{spins}$, the model Hamiltonian becomes thus the spin one-half Hamiltonian (for $\mu_i$ uniform):
\ba\label{onehalf}
\nonumber
H_{1/2} =&&-U\sum_{(i,j)} S^z_{(i,j)}-\mu\sum_{i=1}^N \left(  \frac{1}{2} - S^z \right)_i- t\sum_{(i,j)}  P_{ij}\otimes (S^+_i S^-_j +S^-_i S^+_j)\\
&& - k \sum_{(i,j)}  \left(S^-_{(i,j)}P^{2}_{ij}\otimes (S^+_i S^+_j) +P^{2}_{ij} S^+_{(i,j)}\otimes (S^-_i  S^-_j)\right)
\ea
Let us examine the model in some limits. When the exchange term is vanishing, $k=0$, the model has particle number conservation
\be
[H_{1/2},\hat{N}]=0, \qquad \hat{N} = \sum_i b^\dagger_ib_i
\ee
and therefore it has a $U(1)$ symmetry, corresponding to the local transformation at every site given by
\be
\ket{\psi}\rightarrow \prod_l e^{i\phi b^\dagger_lb_l}\ket{\psi}, \qquad \phi\in[0,2\pi)
\ee
while the total system with $k\ne 0$ does not have particle conservation because particles can be created or destroyed by means of the exchange term with the edges. 
Moreover, the $k=0$ system is self dual at $\mu=0$ under the transformation $b_i\rightarrow b^\dagger_i$.  For every separable state of the form $\ket{\psi} =\ket{i_1, ..., i_{N(N-1)/2}} \otimes \ket{\psi}_{bosons} $, the system is just the usual Hubbard model on the graph specified by the basis state $\ket{i_1, ..., i_{N(N-1)/2}}$.
In the limit of $-U$ positive and very large, all the edges degrees of freedom are frozen in the $\ket{1}$ state. The model becomes a Bose-Hubbard model for hard-core bosons on a complete graph.

It is a typical feature of the richness of the Hubbard model that summing the potential and kinetic term gives a model with an incredibly rich physics. Depending on the interplay between potential and kinetic terms, it can accommodate metal-insulator transitions, ferromagnetism and antiferromagnetism, superconductivity and other important phenomena. The richness of the model comes from the interplay between wave and particle properties.
The hopping term describes degrees of freedom that behave as 'waves', whereas the potential term describes particles \cite{tasaki}. As it is well known, the model is not solvable in two dimensions. The present model is even more complicated by the fact that the graph itself is a quantum variable. It is therefore extremely difficult to extract results from such a model. The hopping term in $t$ favors delocalization of the bosons in the ground state, while the chemical potential $\mu$ is responsible for a finite value of the bosonic density $\rho$ in the ground state given by
\be
\rho = \frac{1}{N}\sum_i \langle b^\dagger_i b_i \rangle.
\ee
The strength of $|\mu|$ determines how many bosons are present in the ground state. For $\mu >0$, a large value of $\mu$ determines $\rho=1$, meaning that the ground state has a boson at every site, whereas for $\mu <0$, a large value of $\mu$ means there are no bosons in the ground state $\rho=0$. In any case, there is no possibility for hopping and this situation describe what is called a Mott insulator. On the other hand, for $k=0$ and $t>\mu$ the hopping dominates and the system is in a superfluid phase. The non vanishing expectation value in the ground state is that of the average hopping amplitude per link
\be
\sigma =\frac{2}{N(N-1)} \sum_{i,j}\langle b^\dagger_i b_j\rangle.
\ee
We expect this situation to hold even for the weakly interacting system $t\gg k \ne 0$. As in the Hubbard model, there should be a quantum phase transition between the Mott insulator and the superfluid phase for a critical value of $\mu/ t$. An extensive numerical simulation of the ground state properties of the model is necessary to understand if, for $k\ne 0$, such transition belongs to the same universality class or a different one. It would also be interesting to understand whether there is a Lieb-Mattis theorem for such a system, namely that there are gapless excitations in the thermodynamic limit for the system of spins one-half.

It should also be evident that depending on the interplay between potential and kinetic energy, the ground state of the system is entangled in the bipartition edges-particles. Starting instead from some separable initial state, the unitary evolution induced by $H_{1/2}$ will entangle states initially separable.

\subsection{Numerical analysis}
We have analyzed several aspect of the dynamics of the system in two different situations. The ``insulator" case is the one in which the potential energies are dominant over the kinetic terms: $U=\mu=1; t=k= 0.1$. The second situation is when the kinetic terms are much stronger, the so called ``superfluid" case: $U=\mu=0.1 ; t=k=1$. We have studied numerically the entanglement dynamics of the model, using as figures of merit the {\em (i)} Entanglement between particles and edges expressed by the von Neumann entropy $S(t)$ of the density matrix reduced to the system of the particles, {\em (ii)} the Entanglement per site $j$ expressed by the von Neumann entropy $s_j(t)$ of the density matrix reduced to the system of just one site, and {\em (iii)} the Concurrence $C(t)$ between a pair of edges, or particles or the particle-edge pair. This expresses the entanglement between these two degrees of freedom alone.

We have simulated the system described by  $H_{1/2}$ with $N=4$ sites, which is $2^{10}$ dimensional. We have labeled the sites $i =1,..,4$ starting from the lower left corner of a square and going clockwise. The basis states for the system are 
$\ket{J_1 J_2 J_3 J_4 ; e_{14} e_{12} e_{23} e_{34} e_{24} e_{13} }$ with $J_i, e_{kl} = 0,1$.  
By direct diagonalization of the Hamiltonian, we compute the time evolution operator $U(t) = e^{-iHt}$. Starting from an initial state $\rho (0)$, the evolved state is $\rho(t) = U(t) \rho U^\dagger (t)$. The entanglement $S(t)$ as a function of time between particles and edges is obtained by tracing out the spatial degrees of freedom, we obtain the reduced density matrix for the hard core bosons: $\rho_{hcb}(t) = \mbox{Tr}_{graph}\rho (t)$. The evolution for the subsystem is not unitary but described by a completely positive map. The entanglement is computed by means of the von Neumann entropy for the bipartition $\mathcal{H} = \mathcal H_{bosons}\otimes \mathcal H_{graph}$, so we have
\be\label{entanglement}
S(t) = -\mbox{Tr} \left(\rho_{hcb}(t)\log  \rho_{hcb}(t) \right)
\ee
The single-site entanglement $s_j(t)$ is instead obtained by tracing out all the degrees of freedom but the site $j$ and then computing the von Neumann entropy of such reduced density matrix. Finally, the last  figure of merit to describe the entanglement dynamics of the model is  the two-spins concurrence $C(t)$ defined in \cite{wot}. We define the $\tau (t)$ reduced system of any two spins in the model, i.e., an edge-edge pair, or an edge-particle pair or a particle-particle pair. The entanglement as function of time between the two members of the pair is given by
\be
\mathcal{C}(\tau(t))\equiv\max(0,\sqrt{\lambda_1}-\sqrt{\lambda_2}-\sqrt{\lambda_3}-\sqrt{\lambda_4}),
\ee
where $\lambda_i$'s are the eigenvalues (in decreasing order $\lambda_1 >\lambda_2 >\lambda_3 >\lambda_4$ ) of the operator $\tau (t)(\sigma_{y}\otimes\sigma_{y})\tau^{*}(t)(\sigma_{y}\otimes\sigma_{y}) $.

There are other important quantity to understand the time evolution of the model. We have computed the expectation value of the particle number operator $S^+_iS^-_i = \ket{1}\bra{1}_i$ at the site $i$, the link operator $P_{ij}$, and the vertex degree operator $D_i = \sum_{k\ne i} P_{ik}$, whose expectation value gives the expected value for the number of edges connected to the vertex $i$. The last important quantity is the fidelity $\mathcal F (t) := |\langle \psi(0)| \psi(t)\rangle|$  of the state $\ket{\psi(t)}$ with the initial state $\ket{\psi(0)}$. This quantity gives a measure of how much the state at the time $t$ is similar to the initial state.

The simulations have been carried out using two initial states $\ket{\psi_1},\ket{ \psi_2}$. The state $\ket{\psi_1}$ is the basis state describing the complete graph $K_4$ without particles: $\ket{\psi_1}=\ket{0000111111}$.  In Fig. \ref{q1} is shown the result of the simulation using $\ket{\psi_1}$ as initial state, and for the model where the on-site potential energy is bigger than the kinetic energies, that is, in the ``insulator" phase: $U,\mu> t,k$. Due to the very high symmetry of the Hamiltonian in the initial subspace, the system is basically integrable and we can indeed see a short recurrence time. Due to the initial symmetry of the state and the fact that no more than one particle is allowed at every site, the system is very constrained and it is integrable. The entangling power of the Hamiltonian is elevated and despite the fact that the overlap with the initial state is very high, the entanglement is non negligible. The expectation value of every link is the same because of symmetry. For such an initial state, there is no qualitative difference other than different time scales between the ``insulator" and ``superfluid" case.

The time evolution starting from a just less symmetric state is far richer. The state $\ket{\psi_2}= \ket{0000111101}$ is the basis state describing the square with just one diagonal, and again no initial particles. As anticipated, we have studied the model for two sets of parameters. The case (a) is the ``insulator" case with parameters $\mu=U=1; t=k=0.1$. The case (b), or ``superfluid" case has parameters $\mu=U=0.1; t=k=1$.

{\em ``Insulator" case (a).---} In the graph, Fig.~\ref{q2}(a) are plotted the time evolutions of $\langle D_1(t)\rangle , \langle D_2(t)\rangle$ which have initial values of $\langle D_1(0)\rangle= 3,\langle D_2(0)\rangle = 2$. The oscillations of these operators are damped as well and the system is thermalizing towards a state which represents an homogeneous graph. It is very remarkable to see the phenomenon of eigenstate thermalization in such a small system. Recently, there has been a revival in the study of how quantum systems react to a sudden quench in the context of equilibration phenomena in isolated quantum systems, and our results are showing indeed that for such an isolated quantum system, the reduced system can thermalize due to the entanglement dynamics \cite{thermalization, quench}. In Fig.~\ref{q2} (b) are plotted the expectation value of the link operators $P_{12},  P_{24}$ as a function of time In the initial state $\ket{\psi_2}$ we have $\bra{\psi_2}P_{12}\ket{\psi_2}=1, \bra{\psi_2}P_{24}\ket{\psi_2}=0$. The evolution of $\langle P_{12}(t)\rangle$ is almost periodic but we see that on the other hand the oscillations of $\langle P_{24}(t)\rangle$ are damping and the system is thermalizing. The behavior of the $P_{13}$ operator is complementary to $P_{24}$ and at long times  $\lim_{t\rightarrow\infty}\langle P_{24}(t)-P_{13}(t)\rangle = 0$.  In Fig.~\ref{q2} (c) we plotted the entanglement per site measured by the von Neumann entropy $s_j(t)$ as a function of time. The sites considered are again $j=1$ and $j=4$. The two quantities split in two separated bands. Naively, one would expect that the vertices with higher degree are more entangled, but it is not so. Comparing with Fig.\ref{q2} (a) we see that the degrees $\langle D_1(t)\rangle,\langle D_4(t)\rangle$ cross several time and have same time average. Surprisingly, the vertex with consistent higher entanglement is the one that started with higher degree at time zero, {\em when the system was in a separable state}. The system has a memory of the initial state that is revealed in the entanglement dynamics. This means that there are some global conserved quantities that are not detected by any local observable, but are instead encoded in the entanglement entropy, which is a function of the global wavefunction. The thermalization process is also shown in the behavior of the fidelity $\mathcal F (t)$ that presents damped oscillations, see Fig.~\ref{q2} (d). The behavior of the von Neumann entropy $S(t)$ in Fig.~\ref{q2} (d) shows that the reduced system of the particles is indeed evolving as an open quantum system. Though some of the observables are thermalizing, the entanglement dynamics does not show any damping. In Fig.~\ref{q2} (e) is shown the time evolution of the expectation values of the number operators $N_i(t) =S^+_i S^-_i (t)$at the vertices $i=1,4$. The vertices are distinguished by the initial state of the graph, namely by their degree. The average number of particles per site is $\overline{N_i(t)}\sim 0.05$. This means that particles are basically involved in virtual creation/annihilation processes through which the graph acquires a dynamics. The next graphs show that the entanglement dynamics is all but trivial. The following Fig.~\ref{q2} (f) shows the time evolution of the Concurrence $C(t)$ between the vertex $i=2$ and the edge $(2,4)$ which is in the state $\ket{0}$ at the initial time. We see deaths and revivals of entanglement, and the damping of the oscillation signals again some thermalization. 

{\em ``Superfluid" case (b).---} In this case, the kinetic terms $k,t$ are dominant over the potential terms $U,\mu$. The dynamics of this model for these parameters is completely different. We do not have any sign of thermalization. The degree expectation values $\langle D_1(t)\rangle , \langle D_2(t)\rangle$ have a similar oscillating behavior with an even higher overlap, see Fig.~\ref{q3} (a). The link operators $P_{14},P_{24}$, shown in Fig.~\ref{q3} (b) oscillate with no damping and are almost completely overlapped. To such distinct behavior with respect to the insulator case, we find a very strong similarity in the behavior of the entanglement per site $s_i(t)$ (see Fig. \ref{q3} (c) where, again, and in a more pronounced way, there is a splitting in two bands depending on the initial state of the system, and not on the degree (or other interesting observables) of the system during the time evolution. As in the insulator case, the vertex that started off with a higher degree is constantly more entangled than the one that started off with a lower degree, even if in the initial state they are both separable states and during the evolution all the relevant observables overlap strongly and have same time averages. This phenomenon again reveals how the entanglement contains global information on the state of the system that is not revealed in the usual local observables one looks at. The entanglement between edges and particles has a similar behavior than in the ``insulator" case, but it is an order of magnitude greater, which is consistent with the fact that now the terms that couple edges and particles (and thus create entanglement) are larger. The superfluidity is revealed also in the behavior of the fidelity $\mathcal F (t)$ that shows no sign of thermalization with constant amplitude of oscillations, see Fig.~\ref{q3} (d). The average number of particles per site is now $\overline{N_i(t)}\sim 0.52$ and is homogeneous (Fig.~\ref{q3} (e)). Particles are delocalized over the quantum graph with a non vanishing expectation value. The Concurrence $C(t)$ in Fig.~\ref{q3} (f) confirms that there is no thermalization in the system. 

To conclude this section, we have studied the model Eq.(\ref{h}) in the case of hard-core bosons. The usual Hubbard model with hard-core bosons on a fixed graph presents two quantum phases at zero temperature. An insulator phase, when the potential energy of the electrons is dominant, and a superfluid phase, when the kinetic energy is dominant. In our model, the graph interacts with the electrons and the graph degrees of freedom are themselves quantum spins that can be in a superposition. We have studied numerically the entanglement dynamics of the system with four vertices starting from a separable state. The evolution with insulator parameters shows typical signs of thermalization in some of the relevant observables. Moreover, the entanglement dynamics reveals a memory of the initial state that is not captured in the observables. The behavior of the dynamics of the ``superfluid" system is completely different, in the fact that there is no apparent thermalization. The memory effect revealed by the entanglement dynamics is present in an even more pronounced way. There are many open questions to be answered: how the entanglement spectrum behaves and what it reveals of the system, what is the phase diagram of the model at zero temperature in the thermodynamic limit, and a systematic study of the correlation functions in the model. We barely started studying the features of this model that presents formidable difficulties, but that promises to be very rich.

\section{Markov chains analysis of the model}
In this section, we develop a general method to describe the evolution of graphs. We regard Eq.(\ref{h}) as the
Hamiltonian for a classical model and consider a configuration of the system
with a fixed number of edges and particles. The sum of these two quantities
is a constant of the evolution. Moreover, it is safe to assume that almost
all edges can be potentially converted in particles. The reason is simple:
fixing the number of vertices, every connected graph (up to isomorphism) can
be obtained by deleting and adding edges that are part of triangles. With
this approximation, we expect that at long times a dynamical
equilibrium is established between particles and edges. When considering the classical
model, we can {disregard} superpositions and look at the dynamics as
a discrete-time process with characteristics described as follows. For
simplicity, we focus on the complete graph $%
K_{N}=(V(K_{N}),E(K_{N}))$, with set of vertices $V(K_{N})$ and set of edges
$E(K_{N})$. {Note that this constraint is not necessary, since we can start
from any graph containing a triangle. The process simply needs at least one
triangle in order to run. }The process, starting from time $t=0$, can be
interpreted as {a probabilistic} dynamics gradually transforming the
complete graph into its connected {spanning } subgraphs. {These
are subgraphs on the same set of vertices. Methods from the theory of Markov
chains appear to be good candidates to study such a dynamics. We identify
with a \textquotedblleft graph of graphs\textquotedblright\ the phase space
representing all possible states of the system considered. In this way a
random walk on the graph, driven by appropriate probabilities, will allow us to
study the behavior of the Hamiltonian, at least restricting ourselves to
the classical case. Thus, the Hamiltonian transforms graphs into graphs. }%
The Markov chain method {suggests} different levels of analysis: a level
concerning the \emph{support} of the dynamics; a level concerning the \emph{%
distribution of particles}. At the first level, we are interested in
studying graph theoretic properties of the graphs/objects obtained during
the evolution, if we disregard the movement of the particles. {This
consists of studying expected properties of the graphs obtained by running
the dynamics long enough. It is important to remark that the presence of
particles does not modify the phase space, which, if we start from }$K_{N}$%
{, is the set of all connected graphs. The graphs obtained cannot
have more edges than the initial one. Notably, particles only alter the
probability of hopping between elements of the phase space. }At the second
level, we are interested in studying how the particles are going to be
distributed on the vertices of the single graphs, and therefore {in
what measure} the particles determine changes on the graph structure and
{consequently} modify the support of the dynamics. When {%
considering} only the support, the Hamiltonian for the classical model
determines the next process:

\begin{itemize}
\item At time step $t=0$, we delete a random edge of $G_{0}\equiv K_{N}$ and
obtain $G_{1}$.

\item At each time step $t\geq1$, we perform one of the following two
operations on $G_{t}$:

\begin{itemize}
\item \emph{Destroy a triangle:} We delete an edge randomly distributed over
all edges in triangles of $G_{t}$. A \emph{triangle} is a triple of vertices
$\{i,j,k\}$ together with the edges $\{i,j\},\{i,j\},\{j,k\}$.

\item \emph{Create a triangle:} We add an edge randomly distributed over all
pairs $\{i,j\}\notin E(G_{t})$ such that $\{i,k\},\{j,k\}\in E(G_{t})$ for
some vertex $k$.
\end{itemize}
\end{itemize}

This process is equivalent to a random walk on a graph $\mathcal{G}_{N}$
whose vertices are all connected graphs. Each step is determined by the
above conditions. {The Hamiltonian gives a set of rules determining
the hopping probability of the walk. }The theory of random walks on graphs
is a well established area of research with fundamental applications now
ranging in virtually every area of science \cite{lov96}. The main questions to ask
when studying a random walk consist of determining the stationary
distribution of the walk and estimating temporal parameters like the number
of steps required for the walk to reach stationarity. The stationary
distribution at a given vertex is intuitively related to the amount of time
a random walker spends visiting that vertex. In our setting, the walker is a
classical object in a phase space consisting of all connected graphs with
the same number of vertices. Fig. \ref{gg} is a drawing of $\mathcal{G}_{4}$%
, the configuration space of all connected graphs on four vertices. This is
a graph whose vertices are also graphs. The initial position of the walker
is the vertex corresponding to $K_{4}$.

\begin{figure}
  \includegraphics[width=12cm]{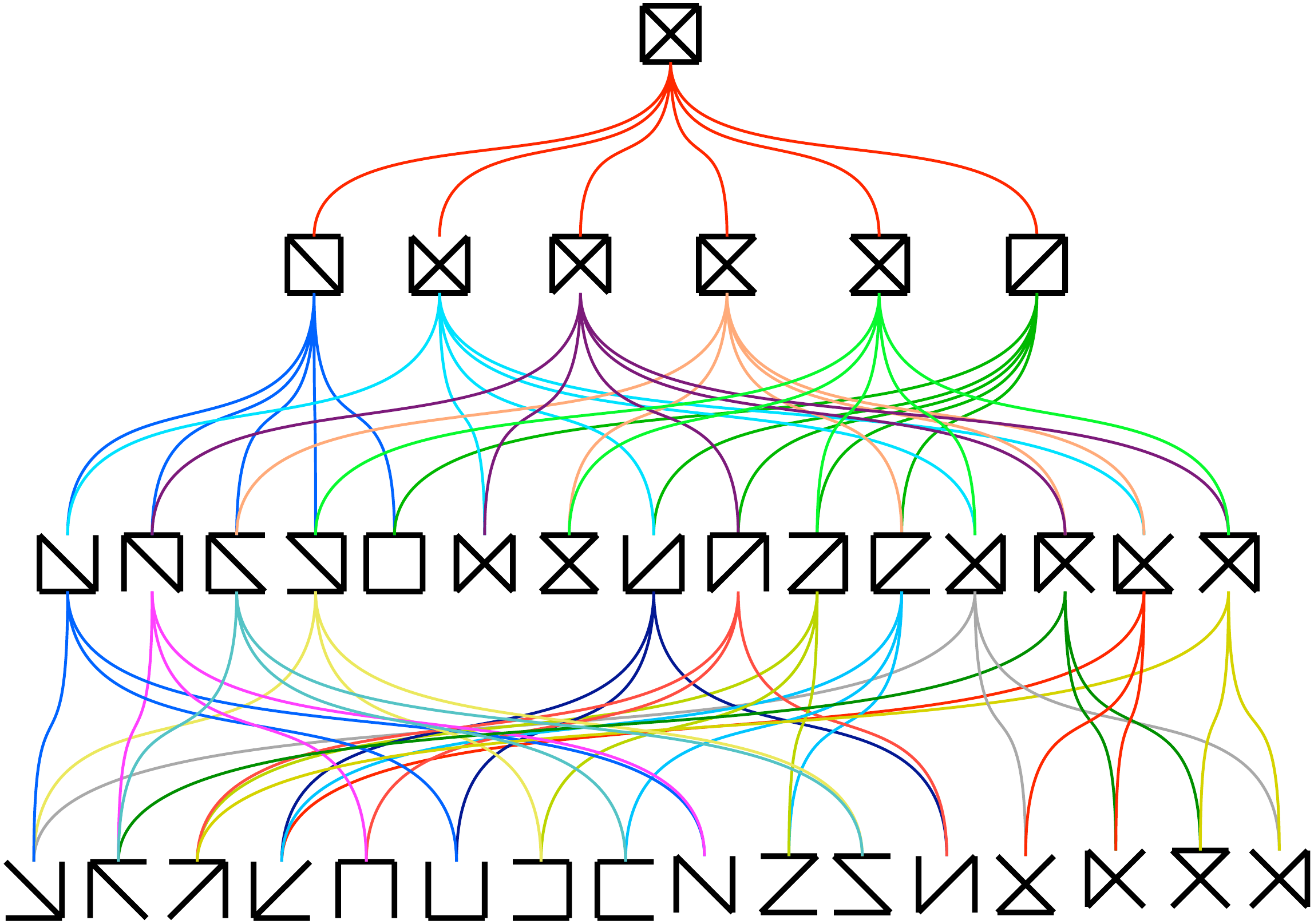}\\
  \caption{The graph $\mathcal{G}_{4}$. The
number of vertices is $38$ and $72$ edges. The number of vertices of $%
\mathcal{G}_{n}$ is exactly the number $d_{n}$ of connected labeled graphs
on $n$ vertices. The number $d_{n}$ satisfies the recurrence $n2^{\binom{n}{2%
}}=\sum\nolimits_{k}kd_{k}2^{\binom{n-k}{2}}$ \protect\cite{wilf94}. The
walker starts from the vertex corresponding to the graph $K_{4}$. Even when
we add particles, the graph $\mathcal{G}_{n}$ remains the support of the
dynamics. The vertices of $\mathcal{G}_{n}$ are the possible states of
classical evolution. In the quantum evolution, we have a weighted
superposition of vertices.}
  \label{gg}
\end{figure}

The graph $%
\mathcal{G}_{N}$ ($N\geq 2$) is connected and bipartite. The number of
vertices of $\mathcal{G}_{N}$ equals the number of connected \emph{labelled}
graphs on $N$ vertices. {We need labels on the vertices to
distinguish between isomorphic graphs. }From the adjacency matrix of a graph
$G$, we can construct the transition matrix $T(G)$ inducing a \emph{simple
random walk} on $G$: $[T(G)]_{i,j}=1/d(i)$ if $\{i,j\}\in E(G)$ and $%
[T(G)]_{i,j}=0$, otherwise. Here, $d(i):=\left\vert \left\{ j:\{i,j\}\in
E(G)\right\} \right\vert $ is the \emph{degree} of a vertex $i$. Notice that
the degrees of the vertices in $\mathcal{G}_{N}$ are not uniform, or, in
other words, $\mathcal{G}_{N}$ is not a \emph{regular} graph. In fact, the
degree of the vertex corresponding to $K_{N}$, which is the number of edges
in this graph, is much higher than the degree of the graphs without
triangles. In Fig. \ref{gg}, it is easy to see that $K_{4}$ has degree $6$
and that the path on $4$ vertices, drawn in the bottom-right corner of the
figure, has only degree $2$.

The evolution of a random walk is determined by applying the transition
matrix to vectors labeled by the vertices encoding a probability
distribution on the graph. The law $\left( T(G)^{T}\right) ^{t}\mathbf{v}%
_{0}^{(i)}=\mathbf{v}_{t}$ gives a distribution on $V(G)$ at time $t$, with
the walk starting from a vertex $i$. The vector $\mathbf{v}_{0}^{(i)}$ is an
element of the standard basis of $\mathbb{R}^{N}$. The vector $\mathbf{v}%
_{t}=(\mathbf{v}_{t}^{(1)},\mathbf{v}_{t}^{(2)},...,\mathbf{v}_{t}^{(N)})^{T}
$ is a probability distribution, being $\mathbf{v}_{t}^{(i)}$ the
probability that the walker hits vertex $i$ at time $t$. The distribution $%
\pi =\left( d(i)/2\left\vert E(G)\right\vert :i\in V(G)\right) $ is the
\emph{stationary distribution}, that is $T(G)\pi \mathbf{=}\pi $. If $G$ is
connected and nonbipartite then $\lim_{t\rightarrow \infty }\left(
T(G)^{T}\right) ^{t}\mathbf{v}_{0}^{(i)}=\pi $ \cite{lov96}. The stationary
distribution is independent of the initial vertex. Therefore, a walk on $%
\mathcal{G}_{N}$ can start from any vertex and the asymptotic dynamics
remains the same. It is simple to see that $\mathcal{G}_{N}$ is bipartite.
Then a random walk does not converge to a stationary distribution, but it
\emph{oscillates} between two distributions with support on the graphs with
an odd and an even number of edges, respectively. In fact, for a bipartite
graph $G$ with $V(G)=A\cup B$, we have the following: $\lim_{t\rightarrow
\infty ,\text{even}}\left( T(G)^{T}\right) ^{t}\mathbf{v}_{0}^{(i)}=\pi
_{even}$ with $[\pi _{even}]_{i}=d(i)/\left\vert E(G)\right\vert $ if $i\in A
$ and $[\pi _{even}]_{i}=0$, otherwise; analogously for $\lim_{t\rightarrow
\infty ,\text{odd}}\left( T(G)^{T}\right) ^{t}\mathbf{v}_{0}^{(i)}=\pi _{odd}
$. For instance, it follows that the stationary distribution of a random
walk starting from any vertex of $\mathcal{G}_{4}$ oscillates between the
two distributions%
\begin{equation*}
\pi _{odd}=(\underset{1}{\underbrace{0}},\underset{6}{\underbrace{%
1/12,...,1/12},}\underset{15}{\underbrace{0,...,0}},\underset{4}{\underbrace{%
1/24,...,1/24}},\underset{12}{\underbrace{1/36,...,1/36}})
\end{equation*}%
and%
\begin{equation*}
\pi _{even}=\underset{1}{(\underbrace{1/12}},\underset{6}{\underbrace{0,...,0%
},}\underset{4}{\underbrace{5/72,...,5/72}},\underset{3}{\underbrace{%
1/31,...,1/31}},\underset{8}{\underbrace{5/72,...,5/72}},\underset{4}{%
\underbrace{0,...,0}},\underset{12}{\underbrace{0,...,0}}).
\end{equation*}%
In particular, for $\pi _{odd}$, $\sum\nolimits_{i:d(i)=6}[\pi _{odd}]_{i}=%
\frac{1}{2}$, $\sum\nolimits_{i:d(i)=4}[\pi _{odd}]_{i}=\frac{1}{6}$, $%
\sum\nolimits_{i:d(i)=2}[\pi _{odd}]_{i}=\frac{1}{3}$; for $\pi _{even}$, $%
\sum\nolimits_{i:d(i)=6}[\pi _{odd}]_{i}=\frac{1}{12}$, $\sum%
\nolimits_{i:d(i)=5}[\pi _{odd}]_{i}=\frac{5}{6}$, $\sum%
\nolimits_{i:d(i)=2}[\pi _{odd}]_{i}=\frac{1}{12}$. Now, what is the most
likely structure of a graph/vertex of $\mathcal{G}_{N}$ in which the random
walker will spend a relatively large amount of time? In other words, where
are we going to find the walker if we wait long enough and what are the
typical characteristics of that graph or set of graphs? From the above
description, one may answer this question by determining the stationary
distribution of the walk in $\mathcal{G}_{N}$. We do not have immediate
access to this information, because we do not know the eigenstructure of $T(%
\mathcal{G}_{N})$. For this reason, we need some way to go around the
problem. We can still obtain properties of the asymptotics by making use of
standard tools of random walks analysis. In particular, as a first step, we
are able to estimate the number of edges in the most likely graph in $%
\mathcal{G}_{N}$. Even if this information is not particularly accurate and
it is far from being sufficient to determine the graphs, it still can give
an idea of their structure. The probability $\pi (G)$ that the walk will be
at a given graph $G$ after a large number of time steps is given, up to a
small error term, by the stationary distribution $\pi $ of the walk. As we
have mentioned above, $\pi (G)$ is given by the number of possible
transitions from $G$ in the walk, divided by a normalizing constant $E$
which is independent of $G$. Based on this, it is possible to find the
expected number of edges in a graph visited by the walk. We will provide
here a sketch of the proof. A more extensive discussion is the Appendix. In
total there are $\binom{\binom{N}{2}}{k}$ graphs with $N$ vertices and $k$
edges. By some of the classical results in the theory of random graphs \cite%
{Bol} we know that for $k\geq \frac{(1+\epsilon )\log N}{N}$, the
probability that a random graph is connected converges to $1$ as $N$ grows.
Let us recall briefly that a random graph is a graph whose edges are chosen
with a fixed probability, equal and independent for each pair of vertices.
The probability that a graph is visited by our walk will have $k$ edges is
given by $\sum_{V(\mathcal{G}_{N}):|E(\mathcal{G}_{N})|=k}\pi (\mathcal{G}%
_{N})$, where the sum is in fact taken over all connected graphs with $k$
edges. In order to estimate this sum we must know how much $\pi (\mathcal{G}%
_{N})$ can vary. However by the results mentioned earlier $\pi (\mathcal{G}%
_{N})$ cannot be larger than $\binom{N}{2}/E$ and not smaller than $(N-1)/E$%
, since that is the largest and smallest number of edges in a connected
graph, and every transition in the walk can be associated with an edge in
the current graph. Hence, the probability that the graph will have $k$ edges
will lie between $\binom{\binom{N}{2}}{k}\binom{N}{2}/E$ and $\binom{\binom{N%
}{2}}{k}(N-1)/E$. However for large $N$ and $k$ this value is completely
dominated by the first term, which is of order $2^{N^{2}}/N$ for $k=\binom{N%
}{2}/2$. A more careful use of these estimates shows that the expected
number of edges will be close to $\binom{N}{2}/2\sim N^{2}/4$. Furthermore
this results will hold true for any walk where the ratio between
probabilities for the most and least likely graphs is not exponentially
large in $N$. This observation tells us that the most likely graphs obtained
during the process tend to have less edges than regular objects as
lattice-like graphs. The number of edges in a square lattice with $n^{2}$
vertices is $2n\left( n+1\right) $. For a cubic lattice on $n^{3}$ vertices
this number is $3n\left( n+1\right) ^{2}$, from the general formula $%
dn\left( n+1\right) ^{d-1}$, where $d$ is the dimension of the lattice.
It seems
natural to try to establish a relation between our walk and random graphs.
After a first analysis, such a relation does not appear obvious. In fact,
the walk on $\mathcal{G}_{N}$ is based upon a \emph{locality principle}
which is not usually defined when considering random graphs. An attempt to
implement this principle for random graphs would consist in constructing Erd\"{o}s-Renyi 
graphs starting from a random tree instead of the empty graph,
that is, the graph with zero edges. A random tree, insures connectivity.
Each pair of vertices at distance two is then joined with a probability $p$.
If we keep adding and deleting edges, we obtain a dynamics similar to the
one induced by our Hamiltonian. It is important to observe that  the
differences with the standard notion of random graph are essentially two:
vertices at distance larger than two cannot be joined with a single step of
the process; there is an additional probability of deleting edges.

Let us
keep in mind that so far we have not consider particles. Indeed, we have
studied only a random walk on $\mathcal{G}_{N}$, where this is the space of
objects obtained by deleting and adding edges that form triangles. However,
our Hamiltonian describes an evolution including particles. Each edge
deletion creates two particles sitting at the end vertices of the deleted
edge. These particles are free to move in the graph. Creation of another
edge will depends on the number of particles. Only when two particles are
located on two different vertices at distance two from each other, then
we have a nonzero probability of creating an edge between such vertices and
therefore creating a new triangle. Including particles, we can define the
following process:

\begin{itemize}
\item At time step $t=0$, we delete a random edge of $G_{0}:=K_{N}$ and
obtain $G_{1}$.

\item At each time step $t\geq1$, we perform one of the following two
operations on $G_{t}$:

\begin{itemize}
\item \emph{Destroy a triangle:} We delete an edge randomly distributed over
all edges in triangles of $G_{t}$. When deleting an edge we \emph{create}
two (indistinguishable) particles. Each particle is located on a vertex of
the graph according to the stationary distribution over $G_{t+1}$. This
reflects the assumption that the particles thermalize.

\item \emph{Create a triangle:} With a certain probability, we add an edge $%
\{i,j\}\notin E(G_{t})$ such that $\{i,k\},\{j,k\}\in E(G_{t})$ for some
vertex $k$. The probability of adding this edge is proportional to the
probability of finding a particle at vertex $i$ and a particle at vertex $j$
at the same time $t$. When adding an edge, we \emph{destroy} two particles.
Specifically, the particles located in the two end vertices.
\end{itemize}
\end{itemize}

Notice that the probability of deleting an edge is independent of the number
of particles in the graph and their locations. On the other side, the
probability of adding an edge is fundamentally connected to the number of
particles. Higher is the number of particles in the graph $G_{t}\in V(%
\mathcal{G}_{N})$ and higher is the probability of adding edges. The process exhibits a conservative 
behavior since the number of particles is always
\begin{equation}
2\binom{N}{2}-\left\vert E(G_{t})\right\vert .  \label{part}
\end{equation}%
So, the dynamics is again equivalent to a random walk on the graph $\mathcal{%
G}_{N}$. This time the random walk is not a simple random walk, since the
probability of each step is determined by the above conditions. In the
transition matrix $T(\mathcal{G}_{N})$ we can have $[T(\mathcal{G}%
_{N})]_{i,j}\neq \lbrack T(\mathcal{G}_{N})]_{i,k}\neq 0$, whenever $j\neq k$%
. This fact gives different nontrivial weights on the edges of $\mathcal{G}%
_{N}$. The transition matrix of the walk is then not necessarily symmetric
and we need a normalization factor to keep it stochastic (\emph{i.e.}, the
sums of the elements in each row is $1$). Whenever an edge is deleted two
particles are created. In the simplest version of our model all particles
are distinguishable and at each time step of the walk all particles are
redistributed according to a random walk on each graph/vertex of $\mathcal{G}%
_{N}$. By Eq. (\ref{part}), there are $N^{2p}$ possible particle
configurations. By the standard behavior of a random walk, particles will
tend to cluster at vertices with high degree. For this model the state of
the walk will consist both of the current graph $G$ and the vector $\mathbf{x%
}$ of positions of all particles. Here the number of possible transitions
will depend both on the structure of $G$, as before, and the number of
particles. Since the number of particle configurations grows rapidly while
the number of edges decreases, the walk will concentrate on connected graphs
with few edges, rather than the denser graphs favored by the model without
particles. If we make a rough estimate of the number of states corresponding
to graphs with $\binom{N}{2}/2$ edges, we see that they are fewer than $N^{%
\binom{N}{2}}2^{\binom{N}{2}}$ and that the number of states corresponding
to graphs with $\mathcal{O}(N)$ edges are more than $N^{(2-\epsilon )\binom{N%
}{2}}$, for any $\epsilon >0$. A comparison argument like the one used for
the case without particles then shows that the expected number of edges for
a graph/vertex will be $o(N^{2})$. It has to be remarked that an \emph{ad
hoc }tuning of the deletion probability for each edges should plausibly
allow to obtain sparser or denser graphs. We have give a rough bound on the
number of edges in a typical graph obtained via the Hamiltonian in Eq.(\ref%
{h}). The bound does not contradict the possibility that such a graph has an
homogeneous structure like a lattice.
Additionally to the number of edges, it may be worth to have some information about cliques.
A \emph{clique} is a complete subgraph. Cliques are then the \emph{densest} regions in a graph.
Knowing the size of the largest cliques gives a bound on the maximum degree and
clearly tells about the possibility of having dense regions.
In the Appendix we will prove that the growth of the largest cliques is logarithmic with respect to the number of vertices.
This behavior also occurs for random graphs.

Numerical simulations were performed to obtain information on the behavior of the classical system under different initial conditions. We are going to discuss the case of  the complete graph $K_N$ as initial state. Complete graphs are interesting for several reasons. First of all, every edge of $K_N$ is eligible for interaction. This implies that edges rapidly transform in particles. As we can see in Fig. \ref{complete_final} (a), the number of particles increases rapidly until it reaches an equilibrium value $\tilde N_0(N)$. The number of steps to reach the equilibrium distribution is the same for all the graph sizes, and is of the order of the inverse of the only \textit{time scale} introduced, given by $\sim P_i^{-1}$. It is interesting to understand the equilibrium distribution of the degree for the various graphs $K_N$. To obtain a better shape for this distribution, we increased the number of simulations from 30 to 60. The result can be seen in Fig. \ref{complete_final} (b). We find that the distribution is Poisson ($D_e$ is the degree), as it is for random graphs:
$$P_N(D_e)=\frac{1}{R}e^{- \frac{(D_e-f(N))^2}{Q(N)}} $$
where $R$ is a normalization constant. In Fig. \ref{complete_final} (a) we find that the function $f(N)$ is, for the graph $K_N$, given by
$$f(N)=\frac{N}{2}, $$
while the function $Q(N)$ of Fig. \ref{complete_final} (b)  is
$$Q(N)=\frac{\sqrt{N}}{2}. $$
It has to be remarked that this result agrees for large $N$ with the combinatorial proof in the last section.

\begin{figure}
  \includegraphics[width=18cm]{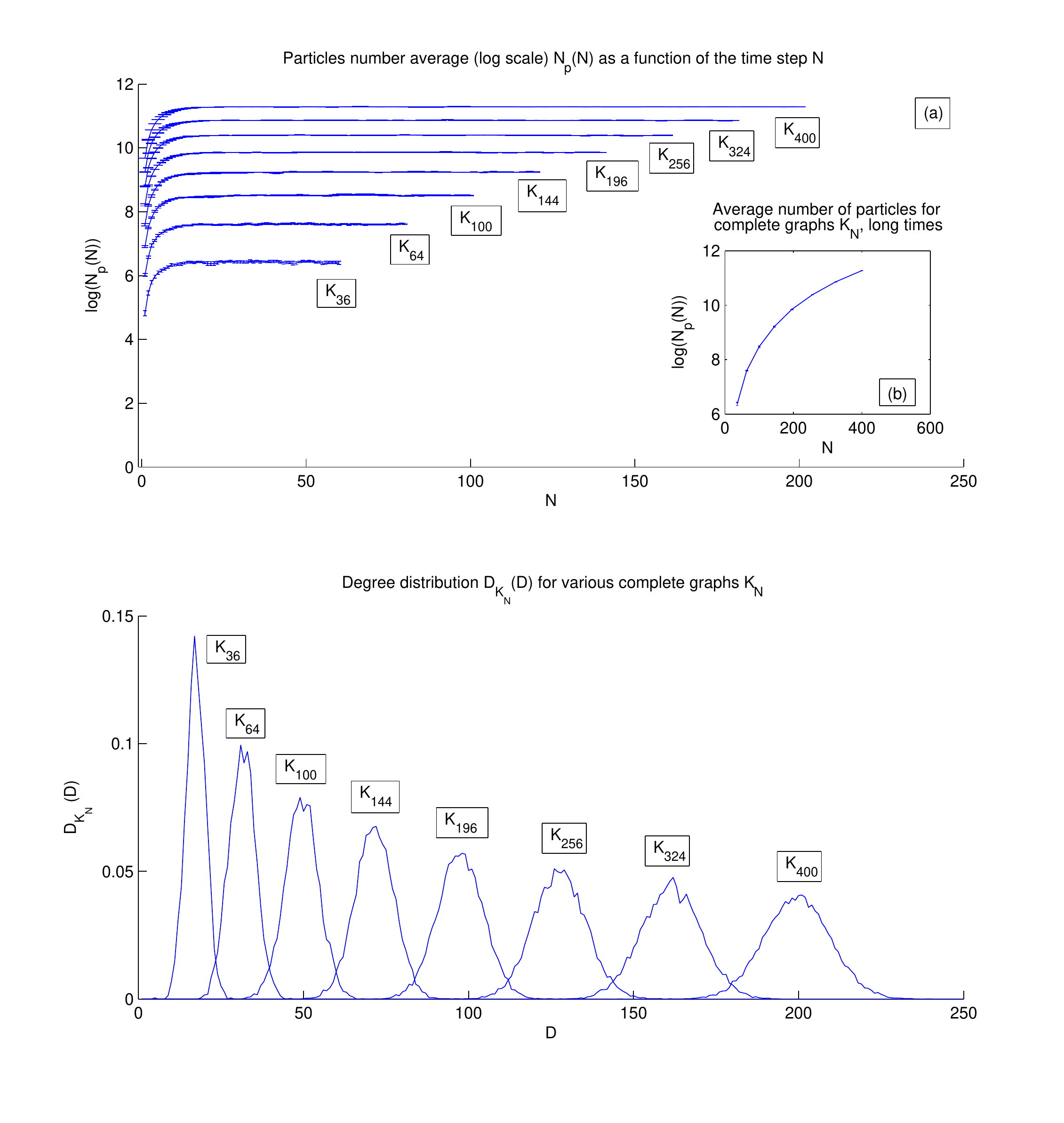}
  \label{complete_final}
  \caption{ \textbf{a)} Plot of the logarithm of the average number of particles $N_p$ as a function of the number of steps. The initial conditions were complete graphs with $N$ vertices, $K_N$, with hopping probability $P_h=1$ and interacting probability $P_i=0.1$. \textbf{b)} In this plot we have the same quantity of (a), $\log(N_p(N))$, plotted as a function of the graph size $N$ at long times (when at equilibrium). \textbf{c)} Plot of the  equilibrium degree distributions for each graph $K_N$. The hopping and interaction probabilities for each simulation are $P_h=1$ and $P_i=0.1$ respectively. The distributions are obtained averaging over 60 simulations. At equilibrium, we obtain Poisson distributions centered on $\frac{N}{2}$, as shown in Fig. (\ref{complete_final2}) (b), and variance $Q(N)=\frac{\sqrt{N}}{2}$, as shown in Fig. (\ref{complete_final2}) (a). This agrees with standard results of the theory of random graphs.} 
\end{figure}
\begin{figure}
  \includegraphics[width=18cm]{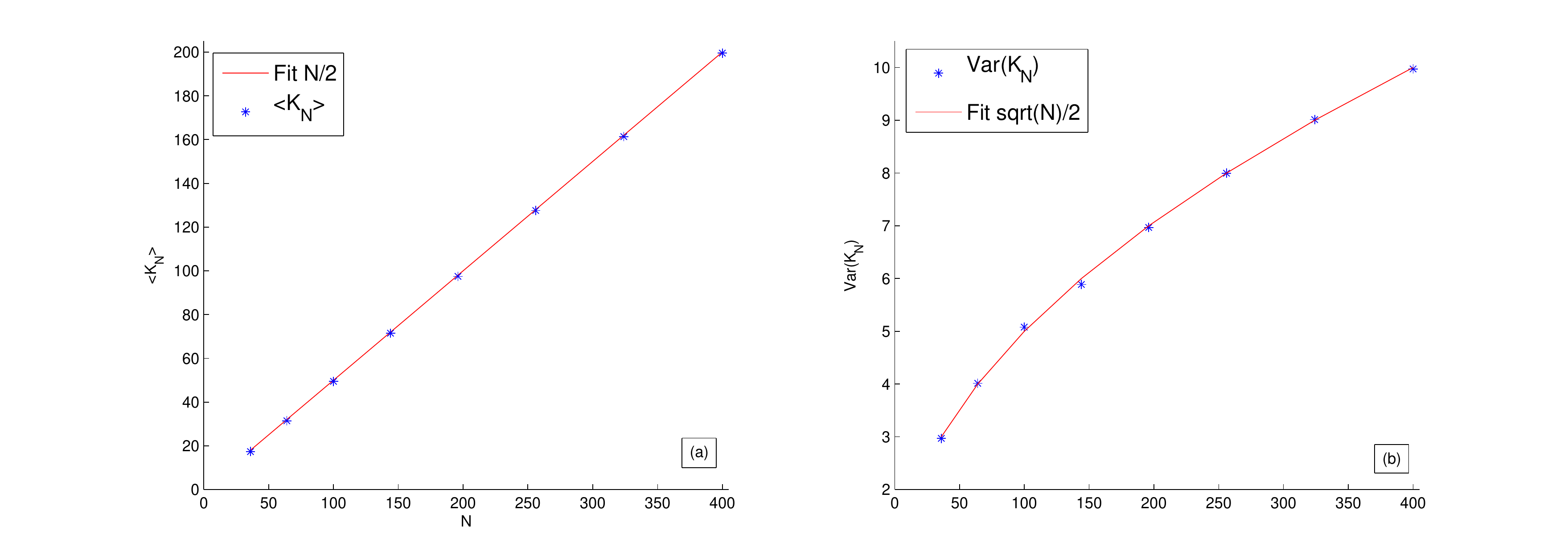}
  \label{complete_final2}
  \caption{\textbf{a)} Plot of the average values of the equilibrium degree distributions of Fig. (\ref{complete_final}) (c). The red line is the fit, given $f(N)=\frac{N}{2}$.  This agrees perfectly  with the various $K_N$. \textbf{b)} Plot of the variance of the equilibrium degree distributions of Fig. (\ref{complete_final}) (c). The red line is the fit, giving $Q(N)=\frac{\sqrt{N}}{2}$. }
\end{figure}


\section{Conclusions}

The work presented here was motivated by the possibility that the answer to the problem of quantum gravity lies in the direction of emergent gravity, and, in particular, in a condensed matter perspective to emergent gravity.
Along these lines, in previous work \cite{KonMarSev,HamMarPreSev},  {\em quantum graphity} was proposed and analyzed, a background independent spin system for emergent locality, geometry and matter.  The quantum states of this system are dynamical graphs whose connectivity represents locality.  The Hamiltonian of \cite{KonMarSev} is not unitary: the universe starts at a high energy configuration (non-local) and evolves to a low energy one (local).  This has clear limitations when applied to a cosmological context.  The present model was originally intended as an energy-conserving version of \cite{KonMarSev}, in which graph edges can be deleted, matter created and vice versa.  The Hamiltonian we used is essentially an extension of the Hubbard model to a dynamical lattice.

In our system, the basic building blocks of the theory are not events, but quantum physical systems $S_i$, represented by a finite-dimensional Hilbert space $\mathcal H_i$ and a Hamiltonian $H_i$.  That is, instead of an event, we have the space of all possible states of $S_i$ and the dynamical rules for the time evolution of these.  The aim is to study the relationships among the $S_i$ and find geometry as the emerging structure imposed by these relationships, independently of their state.  In fact, what we have can be considered as a unification of matter and geometry to matter only.  Certain configurations (bound states) of matter play the role of spatial adjacency in the dynamical sense that they will determine whether interaction of other particles is allowed or not.  In that sense, geometry is not fundamental but rather a convenient way to simplify the fundamental evolution of matter. From the condensed matter point of view, we have presented a Hubbard model in which the particles hop on a graph whose shape is determined by the motion of the particles and it is itself a quantum variable.

We simulated the quantum system for a complete graph with $4$ vertices and hard core bosons and we analyzed the entanglement dynamics of the system, including the one between the particle and edge degrees of freedom. We argued that for the weakly interacting system, entanglement and loss of unitarity for the reduced system can be seen in presence of very high curvature. Moreover, the eigenstate thermalization of the model is studied under two different sets of parameters. Thermalization occurs when potential energy dominates over kinetic energy.

The model discussed in this paper has some features of other models
studied in a diversity of contexts and with different purposes. The first
central aspect is a quantum dynamics involving a set of graphs. Quantum
evolution on graphs is a subject of study related to spin systems, as a
generalization of spin chains (see the review by Bose \cite{sou}), and
quantum walks on graphs, where a particle undergoes a Schr\"{o}dinger
dynamics hopping between the vertices (see the review by Kempe \cite{kem})
where, at each time step, a particle is in a
superposition of different vertices. Of course, the main difference with respect to our model is in the fact that in our model the graph is evolving in time as part of the system's evolution. Focusing on another area, it may be
interesting to highlight a parallel with the work of Gudder \cite{gud}, who
studied discrete-space time building on ideas of Bohm \cite{bohm}. In the
model described by Gudder, the graph is interpreted as a discrete
phase-space in which the vertices represent discrete positions which a
particle can occupy, and the edges represent discrete directions that a
particle can propagate. This setting was primarily introduced to describe
the internal dynamics of elementary particles. From this perspective, each
particle is associated to a graph: vertices represent quarklike constituents
of a particle and edges represent interaction paths for gluons which are
emitted and absorbed by the vertices. Another aspect of our model is that we have
 many particles evolving at the same time. In the mathematical
literature there is a growing number of examples of random walks with
multiple particles/agents. A recent paper by Cooper \emph{et al}.\ \cite{coop}
studied properties of multiple random walks on a (fixed) graph assuming that the
interaction between particles gives rise to various phenomena, like
particles sticking together, annihilating each other, \emph{etc. }It is
important to mention that there are differences between walks with a single
agent and walks with many agents. For instance, there are various settings
in which $k$ random walks visit all the nodes of a graph in expectation $%
\Omega (k)$-times faster than the case of a single walker \cite{al}. Works
addressing random walks on evolving graphs have been considered only
recently, motivated by robotic exploration of the Web \cite{avi, coop1}.

In a unitary system for cosmological evolution, the interesting question to ask is whether the system has long-lived metastable states.  We studied this question rigorously, using analysis in terms of Markov chains, and also gave an intuitive argument, in both cases finding that dynamical equilibrium is reached when, starting from zero initial particles, half the initial number of edges are destroyed.
In the model, there is a candidate configuration for a toy mechanism for attraction.  A toy trapped surface can be modeled as the boundary of a region containing a highly connected subgraph.  There is no singularity.  To probe this region one can use Wen's hamiltonian for light \cite{Wen}, with a coupling that is just a perturbation of the initial hamiltonian.  We found that, in the limit of a large number of vertices inside the highly connected region, only a ray perfectly normal to the surface can escape.

An important issue to emphasize before closing this paper is that our model assumes the existence of a notion of time and of time evolution as given by a Hamiltonian, as opposed to the constrained evolution of canonical pure gravity.  It is a general question for all  
 condensed matter approaches to quantum gravity whether such evolution is consistent with the diffeomorphism invariance of general relativity.  While it is not possible to settle this question without first knowing whether the condensed matter microscopic system has a low energy phase which is general relativity, we can make a few comments, as well as point the reader to more extensive discussion of this issue elsewhere \cite{me,FQXi}.   In general, 
there are two possible notions of time:  
 the time related to the $g_{00}$ component of the metric describing the geometry at low energy and the time parameter in the fundamental microscopic Hamiltonian.  
Let us call the first {\em geometric time} and the second {\em fundamental time}.    
In our geometrogenesis context, it is clear that the geometric time will only appear at low energy, when geometry appears.   The problem of the emergence of geometric time is  the same as the problem of the emergence of space, of geometry.  The constrained evolution of general relativity, often called 
``time does not exist'', refers to geometric time.  By making the geometry not fundamental, we are able to make a distinction between the geometric and the fundamental time, which opens up the possibility that, while the geometric time is a symmetry, the fundamental time is real.  It is important to note that the
relation between geometric and fundamental time is non-trivial and that the existence of a fundamental time does not necessarily imply a preferred geometric time.  We also note that, in the presence of matter in general relativity, a proper time can be identified.  The particular system studied here has matter and in that sense it is perhaps more natural that it also has a straightforward notion of time.

The toy model we presented here is very basic and there are several features we do not expect to see yet.  For instance, there is nothing in the Hamiltonian to encourage the system to settle in metastable states that are regular graphs.  Emergence of geometric symmetries such as Friedman-Robertson-Walker symmetries was not a goal at this stage but can be incorporated in future work by additional terms in the hamiltonian as in \cite{KonMarSev}, or possibly also by introducing causality restrictions as in \cite{CDT}.  Finally, we believe that this model is interesting from the condensed matter point of view. Condensed matter systems are always defined on a given lattice. In this model, the lattice itself is a quantum variable. Such a model, like the spin model of Eq.\ (\ref{onehalf}) can be realized experimentally in a system of quantum dots. We believe this is a novel way to think of condensed matter systems and, in perspective, potentially fruitful for the study of novel quantum phases.

There are many potential generalizations and extensions of the model. Such extensions concern primarily the phase space and the nature of the particles. One possible generalization is to allow each edge to introduce a new vertex as a midpoint, possibly by absorbing
particles. This gives homeomorphic graphs but with a growing phase space. The models could also be modified by allowing the degrees of the vertices
          to influence how the edges and the particles interact. A consequence would be a dynamical graph in which different regions are modified with different speeds.

\section{Acknowledgements}

Research at Perimeter Institute for Theoretical
Physics is supported in part by the Government of Canada through NSERC and
by the Province of Ontario through MRI. This project was partially supported
by NSERC, a grant from the Foundational Questions Institute (fqxi.org), and a grant
from xQIT at MIT. SS is a Newton International Fellow. We are grateful to the Keck Center for Quantum Computing for hospitality at MIT where this work was initiated.

This work was made possible by the facilities of the Shared Hierarchical
 Academic Research Computing Network (SHARCNET:www.sharcnet.ca).

\section{Appendix}

The following general result is useful to study the random walks described
in the paper:

\begin{theorem}
\label{ebound}Let $G$ be a graph drawn from \emph{any} random walk on $%
\mathcal{G}_{N}$ with stationary distribution $\pi $. Assume that there
exists $\alpha <2$ such that
\begin{equation}
g(\pi )=\max_{G,H\in V(\mathcal{G}_{N})}\frac{[\pi ]_{G}}{[\pi ]_{H}}%
<e^{N^{\alpha }}.
\end{equation}%
Then

\begin{enumerate}
\item $E\left( |E(G)|\right) =\frac{1}{2}\binom{N}{2}+o(1)$;

\item For $t\geq \sqrt{N^{\alpha }\binom{N}{2}}$,%
\begin{equation*}
\text{Prob}\left[ \left\vert |E(G)|-\frac{1}{2}\binom{N}{2}\right\vert >t%
\right] \leq 8\exp (-N^{\alpha }).
\end{equation*}
\end{enumerate}
\end{theorem}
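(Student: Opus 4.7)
The plan is to reduce $|E(G)|$ under $\pi$ to (essentially) a symmetric $\mathrm{Binomial}(M,1/2)$ distribution and then apply Hoeffding's inequality, treating the factor $g(\pi)$ as a controlled multiplicative slack. Set $M = \binom{N}{2}$ and let $c(N,k)$ denote the number of connected labelled graphs on $N$ vertices with $k$ edges. The classical Erd\H{o}s--R\'enyi results already cited in the main text give $c(N,k) = (1-o(1))\binom{M}{k}$ above the connectivity threshold $k \gtrsim \tfrac{1}{2}N\log N$, and in particular $C_N := \sum_k c(N,k) = (1-o(1))2^M$; the trivial upper bound $c(N,k) \leq \binom{M}{k}$ holds for all $k$.

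The bounded-ratio hypothesis then enters as follows. If $\pi_{\min}$ and $\pi_{\max}$ are the extreme values of $[\pi]_H$ over connected $H$, then $\pi_{\max}/\pi_{\min} \leq g(\pi)$, while $\sum_H[\pi]_H = 1$ forces $\pi_{\min} \leq 1/C_N \leq \pi_{\max}$. Summing over graphs with a fixed edge count gives, for each $k$,
\[
\frac{c(N,k)}{C_N\, g(\pi)} \;\leq\; \text{Prob}[|E(G)| = k] \;\leq\; \frac{c(N,k)\, g(\pi)}{C_N}.
\]
Using $c(N,k) \leq \binom{M}{k}$ and $C_N \geq \tfrac{1}{2}2^M$ for large $N$, I obtain
\[
\text{Prob}\bigl[\,\bigl||E(G)| - \tfrac{M}{2}\bigr| > t\,\bigr] \;\leq\; 2\,g(\pi)\cdot 2^{-M}\!\!\sum_{|k-M/2|>t}\!\binom{M}{k} \;\leq\; 4\,g(\pi)\,\exp(-2t^2/M),
\]
by the standard Hoeffding bound for the symmetric binomial. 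Inserting $g(\pi) < e^{N^\alpha}$ yields $4\exp(N^\alpha - 2t^2/M)$, and for $t \geq \sqrt{N^\alpha M}$ the exponent is at most $-N^\alpha$, proving item~2 (the constant $8$ in the statement absorbs the $(1-o(1))$ factors in $C_N$ and the sparse-$k$ regime where $c(N,k)$ is much smaller than $\binom{M}{k}$).

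Item~1 follows by integrating the tail bound. Writing $X = |E(G)| - M/2$, I split $E|X|$ at $s_\star := \sqrt{N^\alpha M}$: the contribution from $|s|\leq s_\star$ is at most $s_\star = O(N^{1+\alpha/2})$, and the tail contributes at most $M \cdot 8\exp(-N^\alpha)$, which is exponentially small. Since $M = \Theta(N^2)$ and $\alpha < 2$, dividing by $M$ gives $E[|E(G)|] = \tfrac{1}{2}\binom{N}{2}(1+o(1))$, matching the stated claim in the natural asymptotic sense.

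The main obstacle is verifying the lower bound $C_N \geq \tfrac{1}{2}2^M$, which reduces to the classical fact that a $p=\tfrac{1}{2}$ Erd\H{o}s--R\'enyi graph on $N$ vertices is connected with probability tending to $1$. A secondary subtlety is that the inequality $c(N,k) \leq \binom{M}{k}$ is very loose for small $k$, but this goes in our favour for the upper tail and is irrelevant to the Hoeffding estimate once the overall normalisation $C_N$ is under control. Once these combinatorial facts are in place, the analytic content of the theorem is just the interplay $\log g(\pi) < N^\alpha \ll M \sim N^2$, which is precisely the strength of the hypothesis $\alpha < 2$.
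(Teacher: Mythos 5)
Your proof is correct and follows essentially the same route as the paper's: both use the ratio bound $g(\pi)$ to compare the stationary distribution to the uniform one on connected graphs, invoke the fact that almost all graphs in $G(N,\tfrac{1}{2})$ are connected to control the normalisation, and finish with a Chernoff/Hoeffding bound on the $\mathrm{Bin}\bigl(\binom{N}{2},\tfrac{1}{2}\bigr)$ tail. The only cosmetic difference is that you bound the point probabilities $\mathrm{Prob}[|E(G)|=k]$ against $1/C_N$, while the paper bounds the aggregate ratio of the probabilities of $A_t$ and its complement; you are also right that item~1 is really established in the relative sense $\tfrac{1}{2}\binom{N}{2}(1+o(1))$, which is all the paper's own argument yields as well.
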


\begin{proof}
We will prove the theorem by proving a deviation bound for Prob$\left[
\left\vert |E(G)|-\frac{1}{2}\binom{N}{2}\right\vert >t\right] $ in terms of
$t$.

Let
\begin{equation*}
A_{t}=\left\{ G:\left\vert |E(G)|-\frac{1}{2}\binom{N}{2}\right\vert \leq
t\right\} .
\end{equation*}%
Note that $A_{t}$ contains both connected and disconnected graphs.

Let $C$ be a function from the set of all graphs on $N$ vertices to the set $%
\{0,1\}$. Specifically, let%
\begin{equation*}
C(G):=\left\{
\begin{tabular}{ll}
$1$ & if $G$ is connected$;$ \\
$0$ & otherwise$.$%
\end{tabular}%
\ \ \right.
\end{equation*}%
Let $C(N,t)$ denote the probability that a graph drawn uniformly at random
from $A_{t}$ is connected. The probability that a graph drawn from the
random walk belongs to $A_{t}$ is $\sum_{G\in A_{t}}C(G)[\pi ]_{G}$;
likewise for $G\notin A_{t}$. We will now bound the quotient between these
two probabilities. It is well known, see \emph{e.g.} \cite{Bol}, that
\begin{equation*}
C(N,t)=1+o(1)
\end{equation*}%
for, \emph{e.g.}, $t<N/2$. Then
\begin{align}
\frac{\sum_{H\notin A_{t}}C(H)[\pi ]_{H}}{\sum_{G\in A_{t}}C(G)[\pi ]_{G}}&
\leq g(\pi )\frac{\sum_{H\notin A_{t}}C(H)}{\sum_{G\in A_{t}}C(G)}
\label{bound} \\
& \leq g(\pi )\frac{|\overline{A_{t}}|}{|A_{t}|C(N,t)}  \notag \\
& \leq g(\pi )\frac{|\overline{A_{t}}|}{|A_{t}|(1+o(1))}  \notag \\
& \leq 2g(\pi )\frac{|\overline{A_{t}}|}{|A_{t}|}  \notag
\end{align}%
This can be written as
\begin{equation}
2g(\pi )\frac{P_{N,t}}{1-P_{N,t}},  \label{bound1}
\end{equation}%
where
\begin{equation*}
P_{N,t}=|\overline{A_{t}}|/2^{\binom{N}{2}}
\end{equation*}%
is the probability that a binomial random variable with distribution $%
\mathrm{Bin}({\binom{N}{2}},\frac{1}{2})$ deviates more than $t$ from its
expectation. Using the Chernoff bound we have then
\begin{equation*}
P_{N,t}\leq 2\exp \left( \frac{-2t^{2}}{\binom{N}{2}}\right)
\end{equation*}%
If $t=\sqrt{N^{\alpha }\binom{N}{2}}$ we can thus bound (\ref{bound1}) as
\begin{align}
2g(\pi )\frac{P_{N,t}}{1-P_{N,t}}& \leq 2g(\pi )4\exp (\frac{-2t^{2}}{\binom{%
N}{2}})  \label{bound2} \\
& =8g(\pi )\exp (-2n^{\alpha })  \notag \\
& =8\exp (-N^{\alpha })  \notag
\end{align}%
Since the denominator in the first step of Eq. (\ref{bound}) is less than $1$
we have the following bound for our walk
\begin{equation}
\text{Prob}\left[ H\notin A_{t}\right] \leq \sum_{H\notin A_{t}}C(H)[\pi
]_{H}\leq 8\exp (-N^{\alpha })
\end{equation}%
For our range of $\alpha $ the value of $t$ is $o\left( \binom{N}{2}\right) $
which means that for a graph from $A_{t}$ the number of edges is $\frac{1}{2}%
\binom{N}{2}+o(1)$ and the contribution to the expected number of edges from
graphs not in $A_{t}$ is between zero and $\binom{N}{2}8\exp (-N^{\alpha })$%
. This is $o(1)$. Thus the total expectation is $\frac{1}{2}\binom{N}{2}%
+o(1) $.
\end{proof}

\begin{corollary}
\label{tbound}Let $G$ be a graph drawn from a simple random walk on $%
\mathcal{G}_{N}$. Then%
\begin{equation*}
E(|E(G)|)=\frac{1}{2}\binom{N}{2}+o(1)
\end{equation*}
\end{corollary}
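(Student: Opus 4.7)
The plan is to deduce the corollary as a direct specialization of Theorem \ref{ebound} to the simple random walk, so the only real work is to verify the hypothesis that $g(\pi) < e^{N^{\alpha}}$ for some $\alpha < 2$.

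First I would identify the stationary distribution. For a simple random walk on $\mathcal{G}_N$, the stationary distribution at a vertex $G$ is $[\pi]_G = d_{\mathcal{G}_N}(G)/2|E(\mathcal{G}_N)|$, where $d_{\mathcal{G}_N}(G)$ denotes the degree of $G$ regarded as a vertex of $\mathcal{G}_N$. Consequently,
\be
g(\pi) \;=\; \max_{G,H\in V(\mathcal{G}_N)} \frac{[\pi]_G}{[\pi]_H} \;=\; \frac{d_{\max}(\mathcal{G}_N)}{d_{\min}(\mathcal{G}_N)},
\ee
so the task reduces to controlling the ratio of the largest to the smallest degree in the configuration graph.

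Next I would bound these degrees combinatorially. The degree $d_{\mathcal{G}_N}(G)$ counts the one-step successors of $G$ under the dynamics: edges of $G$ lying in at least one triangle (deletable moves) plus non-edges $\{i,j\}$ of $G$ such that $\{i,k\},\{j,k\}\in E(G)$ for some $k$ (creatable moves). Each family is trivially bounded by the total number of pairs, $\binom{N}{2}$, yielding $d_{\mathcal{G}_N}(G)\le N(N-1)$ for every $G\in V(\mathcal{G}_N)$. For the lower bound, I would run a short case split: if $G=K_N$, every edge lies in a triangle, so $d_{\mathcal{G}_N}(G)\ge \binom{N}{2}\ge 1$; if $G$ contains a triangle but is not complete, the same argument gives $d_{\mathcal{G}_N}(G)\ge 1$; and if $G$ is triangle-free yet connected on $N\ge 3$ vertices, then $G$ has diameter $\ge 2$ and therefore some pair of non-adjacent vertices shares a common neighbor, again giving at least one creatable move. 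Hence $d_{\min}(\mathcal{G}_N)\ge 1$ and
\be
g(\pi) \;\le\; N(N-1),
\ee
which grows only polynomially in $N$ and is thus dominated by $e^{N^{\alpha}}$ for any fixed $\alpha\in(0,2)$ once $N$ is sufficiently large. The hypothesis of Theorem \ref{ebound} is therefore satisfied, and conclusion (1) of that theorem gives $E(|E(G)|)=\tfrac{1}{2}\binom{N}{2}+o(1)$, as claimed.

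The main obstacle I anticipate is essentially bookkeeping rather than analysis: one must be careful that the phase space $V(\mathcal{G}_N)$ really consists only of graphs with $d_{\mathcal{G}_N}(G)\ge 1$ (so that the simple random walk is well defined and the ratio $d_{\max}/d_{\min}$ is finite). The short case split above handles this, and because the resulting bound on $g(\pi)$ is polynomial, one has enormous room—any $\alpha\in(0,2)$ works—so no delicate optimization is needed to invoke Theorem \ref{ebound}.
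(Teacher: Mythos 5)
Your overall strategy is the same as the paper's: the corollary is obtained by verifying the hypothesis $g(\pi)<e^{N^{\alpha}}$ of Theorem \ref{ebound} through degree bounds on the configuration graph $\mathcal{G}_N$, and your combinatorial bounds are correct (indeed your lower bound $d_{\min}\ge 1$ via the triangle/triangle-free case split is spelled out more carefully than the paper's, which asserts $d(i)\ge N-2$; either way $g(\pi)$ is polynomial in $N$ and the hypothesis holds with room to spare). The one point you gloss over, and which the paper spends the first half of its proof on, is that $\mathcal{G}_N$ is \emph{bipartite} (each step changes $|E(G)|$ by exactly one), so the simple random walk is periodic and does \emph{not} converge to the stationary distribution $[\pi]_G=d(G)/2|E(\mathcal{G}_N)|$; it oscillates between $\pi_{even}$ and $\pi_{odd}$, each of which vanishes on half the state space and so has $g=\infty$. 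Since the proof of Theorem \ref{ebound} uses that the distribution of the walk at large times is close to $\pi$, one cannot simply write down the degree-proportional $\pi$ and proceed: the paper first lazifies the walk (staying put with probability $1/2$) so that it genuinely converges to $\pi=\tfrac{1}{2}(\pi_{even}+\pi_{odd})$, and only then applies the degree bounds to this $\pi$. Your argument becomes correct verbatim once you either insert this lazification step or explicitly interpret ``drawn from the walk'' as ``distributed according to the stationary distribution''; as written, the phrase is ambiguous for a periodic chain and the application of Theorem \ref{ebound} is not fully justified.
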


\begin{proof}
We know that the underlying graph of this random walk is bipartite and that
in the asymptotic limit the stationary distribution oscillates between $\pi
_{even}$ and $\pi _{odd}$ depending on whether we have taken an even or an
odd number of steps. However if we start a new walk by not changing the
graph in the first time step with probability $1/2$, the new stationary
distribution will be
\begin{equation}
\pi =\frac{1}{2}(\pi _{even}+\pi _{odd}).  \label{bip}
\end{equation}%
Since we are looking at graphs with $N$ vertices, $d(i)$ is at most $\binom{N%
}{2}$ and not less than $N-2$. Hence,
\begin{equation*}
g(\pi )\leq \frac{N^{2}-N}{2n-4}.
\end{equation*}%
and the corollary follows from Theorem \ref{ebound}.
\end{proof}

\bigskip

Note that here we only used the fact the walk
converges towards a stationary distribution on $\mathcal{G}_{N}$ and that $%
g(\pi )$ is bounded by a polynomial in $N$ for this walk. The same result
holds for any form of walk on $\mathcal{G}_{N}$ for which $g(\pi )$ is not
exponential in $N$.

Let us recall that a random event happens \emph{asymptotically almost surely}%
, or \emph{a.a.s}, if the probability for the event is $1-o(1)$. A \emph{%
clique} in a graph is a subgraph isomorphic to the complete graph. The \emph{%
clique number} of a graph $G$, denoted by $\omega(G)$, is the number of
vertices of the largest clique in $G$.

\begin{theorem}
Let $G$ be a graph drawn from a simple random walk on $\mathcal{G}_{N}$.
Then there exist constants $c_{1}<c_{2}$ such that a.a.s the clique number $%
\omega (G)$ satisfies
\begin{equation*}
c_{0}\log (N)-c_{1}\leq \omega (G)\leq c_{0}\log (N)+c_{2}.
\end{equation*}
\end{theorem}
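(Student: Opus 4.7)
The plan is to transfer the classical clique-number results for Erd\H{o}s--R\'enyi random graphs $G(N,1/2)$ to the stationary distribution of the simple random walk on $\mathcal{G}_N$, using the comparison machinery already developed in Theorem~\ref{ebound} and Corollary~\ref{tbound}. Recall that for $G(N,1/2)$ a classical calculation (first moment for the upper bound, second moment for the lower bound) shows that $\omega(G)=2\log_2 N + O(1)$ a.a.s., with constants $c_0=2/\ln 2$ and additive constants $c_1,c_2$ obtained from the standard concentration around $k^*(N)$, the largest $k$ with $\binom{N}{k}2^{-\binom{k}{2}}\geq 1$. So the target $c_0\log N$ is the same constant appearing in the Erd\H{o}s--R\'enyi theory, and our task is to show that the simple random walk sees essentially the same distribution on the ``bulk'' event $\{|E(G)|\approx \binom{N}{2}/2\}$ up to polynomial factors.

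The first step is to use the lazy version of the walk described in the proof of Corollary~\ref{tbound}, whose stationary distribution is $\pi=\tfrac12(\pi_{even}+\pi_{odd})$. Since every vertex in $\mathcal{G}_N$ has degree between $N-2$ and $\binom{N}{2}$, the ratio $g(\pi)$ is polynomial in $N$. Then by the estimate (\ref{bound}) in the proof of Theorem~\ref{ebound}, for every event $B$ supported on connected graphs,
\be
\Pr_{\pi}[G\in B] \;\leq\; 2g(\pi)\,\frac{|B\cap \mathrm{Conn}(N)|}{|\mathrm{Conn}(N)|},
\ee
and symmetrically a matching lower bound holds after restricting to graphs with $|E(G)|$ in the concentration window $A_t$ of Theorem~\ref{ebound}. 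Since the uniform distribution on connected graphs differs from $G(N,1/2)$ only by a factor $1/\Pr[G(N,1/2)\text{ connected}]=1+o(1)$, any a.a.s.\ statement that holds for $G(N,1/2)$ transfers to our walk after multiplication by the subexponential factor $g(\pi)$.

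The second step is the upper bound. The expected number of $k$-cliques in $G(N,1/2)$ is $\binom{N}{k}2^{-\binom{k}{2}}$, which is $o(N^{-c})$ for any fixed $c$ when $k\geq c_0\log N + c_2$ with $c_2$ sufficiently large. Since $g(\pi)$ is only polynomial, the first-moment bound, multiplied by $g(\pi)$ via (\ref{bound}), still tends to zero, giving $\Pr_\pi[\omega(G)\geq c_0\log N + c_2]=o(1)$. For the lower bound, the classical second-moment calculation for $G(N,1/2)$ shows that $\Pr[\omega(G(N,1/2))<c_0\log N-c_1]$ decays superpolynomially in $N$ for suitable $c_1$; again the polynomial factor $g(\pi)$ does not spoil this, so the same bound holds under $\pi$ after intersecting with the $A_t$ event (which itself carries probability $1-8\exp(-N^\alpha)$). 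Combining the two directions yields the claimed two-sided bound a.a.s.

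The main obstacle is the third step, namely controlling the comparison on the ``edge'' of the concentration window: the inequality (\ref{bound}) compares the walk distribution with the uniform distribution on connected graphs of \emph{any} edge count, whereas the clique-number estimates are sharpest for graphs of density close to $1/2$. I would handle this by splitting the analysis into (i) graphs in $A_t$ with $t=\sqrt{N^\alpha \binom{N}{2}}$, where the density is $1/2+o(1)$ and the Erd\H{o}s--R\'enyi clique bounds apply essentially verbatim (with the additive corrections absorbed into $c_1,c_2$ via monotonicity of the clique number in the edge set), and (ii) graphs outside $A_t$, whose total $\pi$-mass is at most $8\exp(-N^\alpha)=o(1)$ by Theorem~\ref{ebound} and so contributes negligibly to any a.a.s.\ statement. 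Once these two regimes are treated separately, the constants $c_0,c_1,c_2$ can simply be inherited from the Erd\H{o}s--R\'enyi clique-concentration theorem, completing the proof.
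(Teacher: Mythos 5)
Your proposal is correct and follows essentially the same route as the paper's own proof: both transfer the classical $G(N,1/2)$ clique-number concentration (expected value $c_0\log N$ with polynomially and super-polynomially decaying upper and lower tails, respectively) to the walk's stationary distribution by observing that the comparison argument of Theorem~\ref{ebound}, together with the polynomial bound on $g(\pi)$ from Corollary~\ref{tbound}, distorts these tail probabilities by at most a polynomial factor, which cannot overcome their decay. Your extra ``third step'' restricting to the edge-count window $A_t$ is harmless but unnecessary, since the $G(N,1/2)$ clique bounds are already statements about the uniform measure on all graphs and pass to the uniform measure on connected graphs at the cost of a $1+o(1)$ factor.
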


\begin{proof}
It is well known, see, \emph{e.g.}, Chapter 11 of \cite{Bol}, that the
expected clique number of a uniform random graph with edge probability $%
\frac{1}{2}$ is
\begin{equation*}
\frac{2}{\log 2}\log (N)=c_{0}\log N
\end{equation*}%
and that the following concentration bounds hold
\begin{equation}
\text{Prob}\left[ \omega (G)-c_{0}\log N\geq r\right] <N^{-r},  \label{ou}
\end{equation}%
\begin{equation}
\text{Prob}\left[ c_{0}\log N-\omega (G)\geq r\right] <N^{-\lfloor 2^{\frac{%
r-2}{2}}\rfloor }  \label{ol}
\end{equation}%
We can now proceed in the same way as in the proof of Theorem \ref{ebound}
using the set
\begin{equation*}
B_{t}^{u}=\{G|\omega (G)-c_{0}\log N\}|\leq t\}
\end{equation*}%
to give a bound on the upper tail probability and
\begin{equation*}
B_{t}^{l}=\{G|c_{0}\log N-\omega (G)\}|\leq t\}
\end{equation*}%
for the lower tail probability, together with the bound on $g(\pi )$ from
the proof of Corollary \ref{tbound} and the concentration bounds from the
inequalities (\ref{ou}) and (\ref{ol}). This gives the following
inequalities:

\begin{equation}
\text{Prob}\left[ \omega (G)-c_{0}\log N\geq r\right] <c_{1}\frac{N(N-1)}{N-2%
}N^{-r}
\end{equation}

\begin{equation}
\text{Prob}\left[ c_{0}\log N-\omega (G)\geq r\right] <\frac{N(N-1)}{N-2}%
N^{-\left\lfloor 2^{\frac{t-2}{2}}\right\rfloor }
\end{equation}%
We can now use these inequalities to bound the contributions to the expected
clique number. A clear but lengthly calculation can show that the
contributions from the two tails are asymptotically bounded by two
constants, giving us the bound stated in the theorem.
\end{proof}

\bigskip

A rigorous analysis of the expected number of edges when we consider
particles will be more difficult, since the model corresponds to a random
walk on a directed graph, \emph{i.e.}, a graph in which edges have a
direction. This is associated to an adjacency matrix which is not necessarily
symmetric. There are transitions where, \emph{e.g.}, an edge is deleted and
the particle distribution changes so that the endpoint of the edge do not
have any particles on them, thus making the re-addition of the edge
impossible in the next step. However, for states with a large number of
particles, such as any graph with less than $\frac{1}{2}{\binom{n}{2}}$
edges, the vast majority of transitions will be reversible, as it is in the
case without particles.

Let us consider the number of possible transitions from a state $(G,\mathbf{x%
})$ defined as follows: here $G$ has $n$ vertices, $t$ edges which are part
of triangles, $s$ edges with endpoints at distance $2$, and $p$
indistinguishable particles. If we assume that there are particles at all
vertices we have three types of transitions which can be explicitly
enumerated:

\begin{enumerate}
\item There are ${\binom{n+p-1}{p}}$ transitions which correspond to
redistributing the particles without changing $G$.

\item There are $t{\binom{n+p }{p+1}}$ transitions which correspond  to
deleting an edge and redistributing the $p+1$ particles.

\item There are $s{\binom{n+p-2 }{p-1}}$ transitions which correspond  to
adding an edge and redistributing the $p-1$ particles.
\end{enumerate}

Only the number of transitions of the last type is affected by the
assumption that there are particles at all vertices.

If $p\approx c\frac{1}{2}{\binom{n}{2}}$ we can estimate ${\binom{n+p-1}{p}}$
as%
\begin{eqnarray*}
{\binom{n+p-1}{p}} &\approx &\frac{p^{n-1}}{n!}\left( 1+\frac{n(n-1)}{2p}+%
\mathcal{O}(p^{-2})\right)  \\
&\approx &\frac{p^{n-1}}{(n/e)^{n}}(1+\frac{1}{c}+\mathcal{O}(p^{-2})) \\
&\approx &\frac{e^{n}}{n}\left( \frac{p}{n}\right) ^{n-1}(1+\frac{1}{c}+%
\mathcal{O}(p^{-2})) \\
&\approx &\frac{e^{n}}{n}\left( c\frac{n-1}{2n}\right) ^{n-1}(1+\frac{1}{c}+%
\mathcal{O}(p^{-2})) \\
&\approx &\mathcal{O}\left( \frac{(ce)^{n}}{n}\right)
\end{eqnarray*}

Inserting this into the numbers of transitions given above, shows that for
this model the maximum degree of the transition graph is bounded from above
by a simple exponential, and the minimum degree is of course still greater
than a multiple of $n$. According to Theorem 1 this is not sufficient to
change the expected number of edges from being $\frac{1}{2}{\binom{n}{2}}$.
In order to make this analysis fully rigorous it is also necessary to show
that the states with unoccupied vertices do not make a significant
contribution, which will be lengthy but mostly a technical issue.

The discussion for a case with distinguishable particles will be very
similar but ${\binom{n+p-1}{p}}$ will be replaced by $n^{p}$ which is large
enough to escape Theorem 1.
These counts also give an easy way to implement a simulation  algorithm for
both models. Just pick uniformly among all the  possible transitions from
the current state.
If we do not consider particles or consider indistinguishable
particles, the degrees of the vertices in the graphs will be close to those
of random graphs with probability 1/2. This is more or less for the same reason that Theorem
1 works. The number of \textquotedblleft typical\textquotedblright\ graphs
in $G(n,\frac{1}{2})$ is so large that their behavior will still control
these models.





\begin{thebibliography}{99}

    \bibitem{LQG} A. Ashtekar, J. Lewandowski, \emph{Class. Quant. Grav.} \textbf{21}:R53, 2004 , arXiv:gr-qc/0404018; C. Rovelli, Quantum Gravity (Cambridge University Press, Cambridge, United Kingdom, 2005)
    
    \bibitem{CDT} Ambj{\o}rn, J.,  and R.\ Loll, \emph{Nucl. Phys.\  B} (1998), 536--407, hep-th/9805108; Ambj{\o}rn, J., Jurkiewicz, J.\ and R.\ Loll, \emph{Phys. Rev. Lett.} \textbf{93}, 131301, (2004), hep-th/0404156
    
    \bibitem{SF} L.~Freidel, \emph{Int.J.Theor.Phys.} \textbf{44}, 1769 (2005), hep-th/0607032.
    
    \bibitem{QFT} D.\ Oriti, In: {\em Quantum Gravity}, Eds. B.\ Fauser, J.\ Tolksdorf and E.\ Zeidler, Birkhauser, Basel (2007), gr-qc/0512103.
    
    \bibitem{SnetsLQG} C.\ Rovelli and L.\ Smolin, {\it Discreteness of area
and volume in quantum gravity}, Nucl Phys B 442 (1995) 593;
A.\ Ashtekar and J.\ Lewandowski,  Class Quant Grav 14 (1997) A55;
J.\ Lewandowski, Class Quant Grav 14 (1997) 71.
    
    \bibitem{Pen} R. Penrose (unpublished); in Quantum Theory and Beyond,
  edited by T. Bastin (Cambridge University Press, Cambridge,
  United Kingdom, 1971)
  
  \bibitem{GFT} D.~Oriti, arXiv:gr-qc/0607032
    
    \bibitem{HasPer} B.\ Hasslacher and M.J.\ Perry,  Phys Lett 103B (1981) 21.
    
    \bibitem{CDTSHE} J. Ambjorn, J. Jurkiewicz,  R. Loll, The Self-Organized de Sitter Universe, Int.J.Mod.Phys.D17:2515-2520,2009, arXiv:0806.0397.

    \bibitem{emergentqg}D.\ Oriti, Group field theory as the microscopic description of the quantum spacetime fluid: a new perspective on the continuum in quantum gravity, arXiv:0710.3276.
    
    \bibitem{GV} G. Volovik, The Universe in a Helium Droplet, Oxford University Press, Oxford, United Kingdom (2008)
    
    \bibitem{analog} W. G. Unruh, \emph{Phys. Rev. D} \textbf{51}, 2827-2838 (1995), arXiv:gr-qc/9409008; M.\ Visser, S. Weinfurtner,  Analogue spacetimes: Toy models for "quantum gravity'', PoSQG-Ph:042,2007, arXiv:0712.0427.

    \bibitem{Hu} B.\ L.\ Hu, Can spacetime by a condensate?, arXiv:gr-qc/0503067.


    \bibitem{Wengravity}Z.-C. Gu, X.-G. Wen, Emergence of helicity +/- 2 modes (gravitons) from qbit models, arXiv:0907.1203.


    \bibitem{KonMarSmo} T. Konopka, F. Markopoulou, L. Smolin, Quantum Graphity, arXiv:hep-th/0611197.

    \bibitem{KonMarSev} T.\ Konopka, F.\ Markopoulou and S.\ Severini, \emph{Phys. Rev. D} \textbf{77}, 104029 (2008), arXiv:0801.0861.
    
    \bibitem{HamMarPreSev} A.~Hamma, I.~Pr\'emont-Schwartz, S.~Severini, F.~Markopoulou, \emph{Phys. Rev. Lett.} {\textbf{102}}, 017204 (2009), arXiv:0808.2495v2.
    
    \bibitem{MM}N.\ Seiberg,  Emergent Spacetime, \emph{Rapporteur talk at the 23rd Solvay Conference in Physics}, December, 2005, arXiv:hep-th/0601234.

    \bibitem{tasaki} H.~Tasaki, \emph{J. Phys.: Condens. Matter} \textbf{10}, 4353 (1998).

    \bibitem{Wen} M. Levin, X.-G. Wen, \emph{Rev. Mod. Phys.} \textbf{77}, 871-879 (2005), cond-mat/0407140.
    
    \bibitem{seth} S.\ Lloyd,  A theory of quantum gravity based on quantum computation, arXiv:quant-ph/0501135.

    \bibitem{me}    F. Markopoulou, in \emph{Approaches to Quantum Gravity-
   Towards a New Understanding Space, Time and Matter},
   edited by D. Oriti (Cambridge University Press,
   New York, USA, 2009).
    
    \bibitem{Kon} T.\ Konopka, Statistical Mechanics of Graphity Models, \emph{Phys. Rev. D} \textbf{78}:044032, 2008, arXiv:0805.2283.

    \bibitem{MarSmo}F. Markopoulou, L. Smolin, Disordered locality in loop quantum gravity states, \emph{Class. Quant. Grav.} \textbf{24}:3813-3824, 2007, arXiv:gr-qc/0702044.

    \bibitem{LRS} E.~H.~Lieb, and D.~W.~Robinson, \emph{Comm. Math. Phys.} \textbf{28}, 251 (1972).

    \bibitem{wilf94} H. S. , H.S. Wilf, \emph{generatingfunctionology}, (A.K. Peters,San Diego, California,2006)

    \bibitem{Bol} B. Bollob\'{a}s, Random Graphs (Cambridge University Press,
  Cambridge, United Kingdom, 2004)

    \bibitem{lov96} L. Lov\'{a}sz, Random walks on graphs: a survey, in "Combinatorics,
   Paul Erd\"{o}s is Eighty", Vol. II, ed. by D. Miklos, V.T. Sos and T. Szonyi, Bolyai
   Soc. Math. Stud.,Budapest, 1994
    \bibitem{wot} W.K.~Wootters, \emph{Phys. Rev. Lett.} \textbf{80}, 2245 (1998).

    \bibitem{al} N. Alon, C. Avin, M. Kouck\'{y},G. Kozma, Z. Lotker, M. R. Tuttle, \emph{20th ACM Symposium on Parallelism in Algorithms and Architectures}, 2008.

    \bibitem{avi} C. Avin, M. Kouck\'{y}, Z. Lotker, Proceedings of the Conference ICALP, 121, (2008).

    \bibitem{bohm} D. Bohm, A proposed topological formulation of quantum theory, in \emph{The Scientist Speculates}, I. J. Good, ed. (Basic Books, New York, 1962), pp. 302-314.

    \bibitem{sou} S. Bose, \emph{Contemp. Phys.,} \textbf{Vol. 48} (1), 13-30, 2007.

    \bibitem{coop} C. Cooper, A. M. Frieze, and T. Radzik, Multiple Random Walks and Interacting
   Particles Systems, Lecture notes in Computer Science, Springer Berlin, Volume 5556 (2009),
   pp. 388-410

    \bibitem{coop1} C. Cooper, A. M. Frieze, \emph{Internet Mathematics} \textbf{1}, 57--90 (2003).

    \bibitem{gud} S. P. Gudder, \emph{Found. Phys.} \textbf{18} (1988), no. 7, 751--776.

    \bibitem{kem} J. Kempe, \emph{Contemp. Physics,} \textbf{Vol. 44} (4), pp.307-327, 2003.

    \bibitem{thermalization} S.~Popescu, A. J.~Short, and A.~Winter, \emph{Nature Physics} \textbf{2}, 754 (2006); N.~Linden, S.~Popescu, A. J.~Short, and A.~Winter, arXiv:0812.2385.

    \bibitem{quench} L.~Campos Venuti, P.~Zanardi, arXiv:0907.0683; D. I. Tsomokos, A.~Hamma., W.~Zhang, S.~Haas, R.~Fazio, to appear in \emph{Phys. Rev. A (Rapid Comm.)}, arXiv:0909.0752.

\bibitem{wenlight} X.-G.~Wen, \prd {\bf 68}, 065003 (2003); \prb {\bf 68}, 115413 (2003).

\bibitem{FQXi} F. Markopoulou, �Space does not exist, so time can�, http://www.fqxi.org/community/essay/winners (2008). 

\end{thebibliography}
\end{document}